\theoremstyle{definition}
\theoremstyle{plain}
\newtheorem{lemma}{Lemma}[section]
\newtheorem{theorem}{Theorem}[section]
\newtheorem{invariant}{Invariant}
\newcommand{\bt}[1] {BT_{#1}}
	\newcommand{\getedge}[2]{getEdge_{#1}(#2)}
	\def\compactify{\itemsep=0pt \topsep=0pt \partopsep=0pt \parsep=0pt}
	\newcommand{\pc}[1] {pc_{#1}}
\newcommand{\Patrascu}{P\v{a}tra\c{s}cu}
\begin{document}

\title{Improved Worst-Case Deterministic Parallel\\ Dynamic Minimum Spanning Forest}
  \author{
    Tsvi Kopelowitz\\
    Bar-Ilan University\\
    \texttt{kopelot@gmail.com}
    \and Ely Porat\\
    Bar-Ilan University\\
    \texttt{porately@cs.biu.ac.il}
    \and Yair Rosenmutter\\
    Bar-Ilan University\\
    \texttt{yapoke@gmail.com}
  }
  \maketitle

  \begin{abstract}

    This paper gives a new deterministic algorithm for the dynamic Minimum Spanning Forest (MSF) problem in the EREW PRAM model, where the goal is to maintain a MSF of a weighted graph with $n$ vertices and $m$ edges while supporting edge insertions and deletions.
    We show that one can solve the dynamic MSF problem using $O(\sqrt n)$ processors and $O(\log n)$ worst-case update time, for a total of $O(\sqrt n \log n)$ work. This improves on  the work of Ferragina [IPPS 1995] which costs $O(\log n)$ worst-case update time and $O(n^{2/3} \log{\frac{m}{n}})$ work.
  \end{abstract}

\section{Introduction}
In the \emph{dynamic minimum spanning forest} (MSF) problem, the goal is maintain a MSF $F$ of an undirected dynamic graph $G=(V,E)$ with weight function $w:E\rightarrow \mathbb{R}$, while supporting edge insertions and deletions. The dynamic MSF problem is one of the most fundamental dynamic graph problems, and has been used as a subroutine for solving many other graph problems~(\cite{DBLP:conf/focs/AbrahamDKKP16},\cite{DBLP:conf/focs/EppsteinGIN92},\cite{DBLP:conf/swat/Thorup00},\cite{DBLP:journals/combinatorica/Thorup07}).
The first sequential algorithm solving the dynamic MSF problem has a worst-case update time of $O(\sqrt m)$, where $m$ is the number of edges, and was introduced by Frederickson~\cite{Frederickson85}.
Using the sparsification technique of Eppstein et al.~\cite{DBLP:conf/focs/EppsteinGIN92,EppsteinGIN97} on Fredrickson's algorithm reduces the worst-case update time to $O(\sqrt n)$, where $n$ is the number of vertices.
Both of these results are deterministic.
While there have been several improvements on the time cost when using randomization or allowing amortization, the $O(\sqrt n)$ time bound is the best known for deterministic worst-case dynamic MSF.

\paragraph{Dynamic MSF in the PRAM model.}
While the dynamic MSF problem in sequential models has received a lot of attention from researchers, there has been no progress since the 90s in the PRAM Model.
Das and Ferragina~\cite{DasF94} presented a dynamic MSF algorithm in the EREW PRAM model, which is based on Frederickson's~\cite{Frederickson85} sequential algorithm, that uses $O(\frac{m^{2/3}}{\log n})$ processors, $O(\log n)$ worst-case time, and $O(m^{2/3})$ work.
Ferragina~\cite{Ferragina95} showed how to parallelize the sparsification technique of Eppstein et al.~(\cite{DBLP:conf/focs/EppsteinGIN92}, \cite{EppsteinGIN97}), thereby obtaining a fully dynamic MSF algorithm in the EREW PRAM model, that uses $O\left(\frac{n^{2/3}\log \frac m n}{\log n}\right)$ processors, $O(\log n)$ worst-case time, and $O(n^{2/3} \log{\frac{m}{n}})$ work.
Liang and McKay \cite{LiangMckay} proposed a different parallel algorithm for dynamic MSF that uses $O(n^{2/3})$ processors and has $O(\log{n} \cdot \log{\frac{m}{n}})$ parallel worst-case time.

\paragraph{Our results.}
In this paper we give the first improvement on dynamic MSF in the EREW PRAM model in over 20 years (in terms of deterministic worst-case update times). The main result is summarized by the following theorem.
\begin{theorem}
	\label{theorem:mstpar}
	There exists a deterministic algorithm for the dynamic MSF problem in the EREW PRAM model that uses $O(\sqrt{n})$ processors and has a parallel worst-case update time of $O(\log{n})$. The resulted work of the algorithm is $O(\sqrt{n} \log{n})$.
\end{theorem}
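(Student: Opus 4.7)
My plan is to combine a parallel EREW implementation of Frederickson's $O(\sqrt{m})$ worst-case dynamic MSF algorithm with the sparsification framework of Eppstein et al. Sparsification will let me assume that each dynamic-MSF instance I need to maintain has only $O(n)$ effective edges, reducing the target from $\sqrt{m}$-style work per level down to $\sqrt{n}$. Given this, the theorem boils down to showing that one level of Frederickson's structure on $O(n)$ edges supports an update with $O(\sqrt{n})$ processors in $O(\log n)$ EREW depth.

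First, I would revisit Frederickson's 2-dimensional topology tree, which partitions the current MSF into $\Theta(\sqrt{n})$ clusters of size $\Theta(\sqrt{n})$ and maintains, for each cluster, the minimum-weight non-tree edge incident to it. An edge update touches $O(\sqrt{n})$ clusters: their internal balanced structures and cross-cluster minima must be refreshed, and when a tree edge is deleted an exchange edge must be identified. The key step is to assign one processor per affected cluster and to show that the local refresh (splits, merges, and re-election of minima in balanced trees on incident non-tree edges) can be performed simultaneously in $O(\log n)$ depth with no concurrent memory access, by laying out the topology tree via an Euler tour so that different processors touch disjoint regions and by using parallel list-ranking for the path-type queries on the topology tree. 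I would then embed this parallel dynamic MSF into Ferragina's parallelization of the sparsification tree: an update descends a single root-to-leaf path of $O(\log n)$ nodes, each maintaining a dynamic MSF on $O(n)$ effective edges. Processing the levels one after the other with the same $O(\sqrt{n})$ processors yields total depth $O(\log n)$ and total work $O(\sqrt{n} \log n)$.

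The main obstacle will be the EREW implementation of the intra-cluster and inter-cluster updates: Frederickson's sequential algorithm walks freely through a single shared balanced-tree structure, whereas in EREW the $O(\sqrt{n})$ processors must operate on pairwise-disjoint memory regions in every parallel step. I would address this by pre-replicating, at each topology-tree node, the summary information needed for exchange-edge queries, and by recasting the exchange-edge computation after a tree-edge deletion as a parallel tree-contraction on the topology tree, both of which are standard $O(\log n)$-depth EREW primitives. A second subtlety is ensuring that when a new edge is promoted into the MSF, the affected clusters' summaries are reconstructed without two processors writing to the same ancestor-summary cell; this I would handle by a bottom-up, level-synchronous pass along the $O(\log n)$-depth topology tree, so that at each synchronous round only one processor per ancestor node is active. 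Once this exclusive-access version of Frederickson's update is in place, the $O(\sqrt{n}\log n)$ work and $O(\log n)$ depth bounds follow directly from composing it with the already-parallel sparsification layer.
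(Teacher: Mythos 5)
Your proposal has two genuine gaps, one arithmetic and one structural.

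First, the sparsification depth accounting fails. If you ``process the levels one after the other,'' an update visits $O(\log n)$ nodes of the sparsification tree, and even granting that each level-$i$ instance (on $O(n/2^i)$ edges, not $O(n)$) supports an update in logarithmic depth, the total depth is $\sum_{i=1}^{O(\log n)} O(\log (n/2^i)) = O(\log^2 n)$, not $O(\log n)$. Sequential sparsification composes benignly with an $O(\sqrt{n'})$-time base structure, but adds a $\log n$ factor to an $O(\log n')$-depth one, and Ferragina's off-the-shelf parallel sparsification likewise inflates the total work by an $O(\log\frac m n)$ factor. The paper has to redesign sparsification: it isolates the inherently sequential bookkeeping (the root-to-leaf traversals, building $VCopy$, scanning $REdges$), which costs $O(\log n)$ in total, from the local-graph updates, which it shows can be executed independently and \emph{concurrently at all levels}, assigning $O(\sqrt{n/2^i})$ processors to level $i$ so that the processor counts sum to $O(\sqrt n)$. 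Without some such cross-level parallelization your bound degrades to $O(\log^2 n)$ depth.

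Second, the core claim --- that one level of Frederickson's two-dimensional topology tree on $O(n)$ edges can be updated with $O(\sqrt n)$ processors in $O(\log n)$ EREW depth --- is asserted rather than established, and it is exactly the hard part. The known EREW parallelization of Frederickson's structure (Das and Ferragina) achieves only $O(m^{2/3})$ work; the obstruction is that a tree-edge swap forces a level-by-level restructuring of the topology tree and of the two-dimensional structure over cluster pairs, and these $O(\log n)$ restructuring rounds are sequentially dependent, each itself needing logarithmic-depth minimum computations over $\Theta(\sqrt n)$ cluster pairs. Euler-tour layouts, list ranking, and tree contraction do not remove this dependency chain. The paper avoids topology trees altogether: each MSF tree is an Euler-tour list partitioned into $J=O(\sqrt n)$ chunks of size $O(\sqrt n)$; a $J\times J$ matrix of minimum inter-chunk edge weights ($CAdj$) is aggregated in a 2-3 tree over the chunks, replicated into $J$ single-coordinate trees to satisfy exclusive reads, with tournament trees computing all minima. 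In that design every step of an update is a flat parallel scan or a single balanced-tree pass, which is what makes the $O(\log n)$ depth and $O(\sqrt n\log n)$ work provable. To salvage the topology-tree route you would need an update procedure that breaks the level-by-level dependency, which is not a standard primitive.
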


\paragraph{Dynamic MSF and Dynamic Connectivity.}
In the \emph{dynamic connectivity} problem the goal is to maintain a dynamic graph $G$ with edge insertions and deletions, while supporting connectivity queries: ``given two vertices in $G$, are the vertices in the same connected component?".
The dynamic connectivity problem is a weaker version of the dynamic MSF problem, since one way of solving dynamic connectivity is to maintain a spanning forest of $G$ and using dynamic connectivity data structures for forests such as ~\cite{DBLP:journals/jcss/SleatorT83}.
Thus, Frederickson's algorithm together with the sparsification techinque yield a $O(\sqrt n)$ worst-case deterministic update time for dynamic connectivity. Recently, Kejlberg-Rasmussen et al.~\cite{Kejlberg-Rasmussen16} reduced the runtime slightly to $O\left(\sqrt{\frac{n(\log{\log{n}})^2}{\log n}}\right)$. The proof of Theorem~\ref{theorem:mstpar} is based on the approach used by Kejlberg-Rasmussen et al.~\cite{Kejlberg-Rasmussen16} for solving dynamic connectivity.

\subsection{Algorithmic Overview}
Throughout the paper we apply the standard assumption that the graph is \emph{sparse}, i.e., the graph has $m=O(n)$ edges.
In the sequential case this assumption is permissable due to the sparsification technique of~\cite{EppsteinGIN97}.
We later show how to extend the sparsification technique for dynamic MSF to the EREW PRAM model.
We also assume throughout the paper that the maximum degree in $G$ is 3 by applying the techniques of Frederickson~\cite{Frederickson85}.
This last assumption costs an $O(1)$ worst-case time additive overhead per operation.

We are now ready to provide an overview of our techniques.
We emphasize that our overview sacrifices accuracy for the sake of intuition.
An accurate description of the techniques that we use is given in the rest of the paper.
We believe it is best to first discuss a sequential version of our dynamic MSF algorithm, which is
an $O(\sqrt {\log n})$ factor slower than the algorithm of Frederickson~\cite{Frederickson85}
after applying the sparsification technique of~\cite{EppsteinGIN97}.
Nevertheless, the proof of the following theorem is helpful for understanding our proof of Theorem~\ref{theorem:mstpar}.

\begin{theorem}
	\label{theorem:mstseq}
	There exists a sequential deterministic algorithm for the dynamic MSF problem, which has a worst-case update time of $O(\sqrt{n \log n})$.
\end{theorem}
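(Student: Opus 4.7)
\emph{Proof plan for Theorem~\ref{theorem:mstseq}.} The plan is to follow the general Frederickson-style template with a single-level clustering and a carefully balanced parameter choice. First, apply the sparsification technique and Frederickson's vertex-splitting reduction, so that we may assume $m = O(n)$ and the maximum degree of $G$ is at most $3$. Next, maintain a partition of the MSF $F$ into $\Theta(\sqrt{n/\log n})$ connected clusters of size $z = \Theta(\sqrt{n\log n})$, using Frederickson's restricted-partition machinery (the bottom level of a topology tree). This machinery supports tree edge insertions and deletions in $O(z)=O(\sqrt{n\log n})$ worst-case time while keeping all cluster sizes proportional to $z$.

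On top of the clustering, for each unordered pair of clusters $(C,C')$ I would store a balanced BST $T_{C,C'}$ of all non-tree edges with one endpoint in $C$ and the other in $C'$, keyed by weight; and for each cluster $C$, an auxiliary priority queue $Q_C$ whose elements are the minima of $T_{C,C'}$ as $C'$ ranges over the other clusters. A non-tree edge insertion or deletion touches only two BSTs and two priority queues, each of size $O(n/z)=O(\sqrt{n/\log n})$, for a cost of $O(\log n)$. For a tree edge deletion of $e$, first reorganise the clusters at cost $O(z)$; then finding the replacement edge reduces to a \emph{cut-minimum} subproblem: $e$ splits $F$ into $F_1,F_2$, and we must report the minimum-weight non-tree edge whose endpoints lie on opposite sides.

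The main obstacle is the cut-minimum step: a direct enumeration would scan all $\Theta((n/z)^2)$ cluster pairs, which is too expensive. My plan is to overlay a weighted aggregation hierarchy (a weighted analogue of Frederickson's 2-dimensional topology tree) in which each internal node stores the minimum non-tree edge exiting its subtree. Exploiting this hierarchy, the minimum non-tree edge crossing the cut induced by $e$ can be retrieved by inspecting $O(n/z)$ cluster-versus-subtree pairs, each in $O(\log n)$ time, for a total of $O((n/z)\log n)=O(\sqrt{n\log n})$. Maintenance of the aggregation under non-tree edge updates costs $O(\log n)$ per affected node and under tree edge updates is absorbed into the $O(z)$ rebalancing budget. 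The balancing equation $z + (n/z)\log n$ is minimized at $z=\sqrt{n\log n}$, giving the claimed worst-case bound; this is also the parameter that will later enable $O(\log n)$ parallel depth on $O(\sqrt n)$ processors in the proof of Theorem~\ref{theorem:mstpar}.
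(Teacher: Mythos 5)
Your overall architecture --- a single level of $\Theta(\sqrt{n\log n})$-sized clusters, pairwise cluster-to-cluster minima, and a hierarchical aggregation answering the cut-minimum query in $O((n/z)\log n)$ time, balanced against the $O(z)$ reorganization cost --- is the same balancing act the paper performs, but via a genuinely different decomposition: you cluster the forest itself with Frederickson's restricted partition, whereas the paper partitions the \emph{Euler tour} of each tree into contiguous chunks and aggregates per-chunk arrays $CAdj_c$ of pairwise minima in a 2-3 tree (the LSDS), answering the cut query by combining the root's $CAdj$ array of one list with the root's membership vector of the other in flat $O(n/z)$ time. The Euler-tour choice is deliberate (it is what makes the later parallelization go through), but for the sequential statement either decomposition could in principle be made to work.

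There is, however, one step in your plan that fails as stated: maintaining the per-pair structures under tree-edge updates. When a tree edge is inserted or deleted, the restricted partition reorganizes $O(1)$ clusters, and a cluster $C$ of size $\Theta(z)$ that splits into $C_1,C_2$ (or merges with a neighbor) forces every one of its up to $\Theta(z)$ incident non-tree edges to migrate from $T_{C,C'}$ to $T_{C_1,C'}$ or $T_{C_2,C'}$. With balanced BSTs keyed by weight this costs $\Theta(\log n)$ per edge, i.e.\ $\Theta(z\log n)=\Theta(\sqrt{n}\log^{3/2}n)$ per update, which exceeds the claimed $O(\sqrt{n\log n})$ bound; your remark that this maintenance ``is absorbed into the $O(z)$ rebalancing budget'' is not justified. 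The fix is to drop the BSTs and store, for each cluster, only a length-$O(n/z)$ array of minimum edge weights to every other cluster: after a reorganization, the arrays of the $O(1)$ affected clusters are rebuilt by a single scan of their $O(z)$ incident edges in $O(z+n/z)$ time, the corresponding entries in all other clusters' arrays are patched in $O(n/z)$ time, and the aggregation hierarchy is repaired in $O((n/z)\log(n/z))=O(\sqrt{n\log n})$ time. (A non-tree edge deletion then also costs $O(z)$ rather than $O(\log n)$, since one array entry must be recomputed by rescanning, but this is still within budget.) This is precisely the paper's $CAdj$/LSDS mechanism. Relatedly, the ``weighted two-dimensional topology tree'' for the cut query is asserted rather than constructed; the paper's concrete realization stores at each LSDS node the entrywise minimum of the $CAdj$ arrays below it together with a membership bit-vector, which suffices to answer the query with one pass over two length-$(n/z)$ arrays followed by an $O(z)$ scan of the winning cluster's incident edges.
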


\paragraph{Euler tours and lists.}
The proof of Theorem~\ref{theorem:mstseq} is based on the dynamic connectivity algorithm presented in \cite{Kejlberg-Rasmussen16}.
The basic technique is to maintain \emph{Euler tours} of the trees in $F$, which is the MSF of $G$. An \emph{Eulerian circuit} in a directed graph is a tour on the edges of the graph, starting and ending on the same vertex, in which each edge is visited exactly once. There is an Eulerian circuit if and only if for every vertex $u$ in the graph the out-degree of $u$ is the same as the in-degree of $u$.
For a tree $T$ in a spanning forest of an undirected graph $G$, an Euler tour of $T$ is a list of the edges in $T$. The Euler tour of $T$ is created by treating each undirected edge as two directed edges in different directions, so that for every vertex $u$ in $T$ the out-degree of $u$ is the same as the in-degree of $u$.
The authors of~\cite{Kejlberg-Rasmussen16} showed how to reduce the problem of maintaining Euler tours to that of supporting splits and merges of linked lists, and finding a \emph{minimum weight replacement} (MWR) edge in the case of deleting an edge in $G$ that is also in $F$.
While supporting operations on lists is generally straightforward, being able to find a MWR edge is the challenging aspect.
Nevertheless, lists turns out to be very convenient for parallelization since different processors can focus on different parts of the list.

\paragraph{Chunks and LSDS.}
In order to be able to manipulate the lists efficiently, we partition each list into \emph{chunks} of size $K=O(\sqrt {n\log n})$, so that there are $J=O(\sqrt {n/ \log n})$ chunks.
The lists contain copies of vertices as they appear in the Euler tour, and so, by our assumptions, every chunk $c$ has as most $O(K)$ edges touching vertices with copies in $c$.
Throughout the execution of the algorithm chunks are merged and split, either due to a list splitting at a chunk or in order to guarantee that every chunk is of size at most $O(K)$.
Above the chunks, we use a tree-like data structure, called the \emph{list sum data structure} (LSDS) that has at most $J$ leaves (one for each chunk) and height $O(\log J)$.

The basic idea is to separately aggregate useful information for finding MWR edges from all of the elements in each chunk, and then use the LSDS to aggregate all of the information from all of the chunks.
The information stored for a chunk $c$ is an array $CAdj_c$ of size $J$ that stores the connectivity information between $c$ and every other chunk.
The array stores one entry for each chunk such that the $i$'th entry in all of these arrays represents the chunk with id $i$.
This information stored in $CAdj_c$ is derived from the at most $O(K)$ edges touching vertices that have a copy in $c$.
The information stored at an LSDS tree vertex $u$ is an array $CAdj_u$ of size $J$ which stores the aggregate of all of the chunk information of chunks in the subtree of $u$.

When a chunk $c$ is the outcome of either a split or merge of chunks, $CAdj_c$ is updated by scanning the elements of $c$ in $O(K)$ worst-case time (since each chunk contains at most $O(K)$ elements).
Then, for every other chunk $c'$ the connectivity information between $c'$ and $c$ that needs to be stored in $CAdj_{c'}$ is derived from $CAdj_c$.
Since there are $O(J)$ chunks, this takes $O(J)$ worst-case time.
The information in $CAdj_c$ is then propagated up the LSDS tree to the ancestors of $c$, spending $O(J)$ worst-case time per ancestor for a total of $O(J \log J) = O(K)$ worst-case time.
Finally, the entries corresponding to $c$ in \emph{all} of the vertices of the LSDS tree are updated by scanning the LSDS tree while only accessing the entries in the arrays that correspond to $c$.
This last part spends $O(1)$ worst-case time per tree vertex for a total of $O(J)$ worst-case time.
Thus, the cost of merging and splitting chunks ends up being $O(J+K) = O(\sqrt{n\log n})$.

In addition to the splitting and merging of chunks, the algorithm will sometimes need to split or merge LSDS structures.
These splits and merges are standard tree operations, and each such tree operation touches $O(\log n)$ tree-vertices.
The time cost is dominated by updating the $CAdj$ arrays (of size $O(J)$ each) of the vertices touched during the tree operations, for a total of $O(\sqrt{n\log n})$ worst-case time.

\paragraph{Finding a MWR edge.}
The method for finding a MWR edge is to find the lightest edge connecting the vertices with copies in one list $L_1$ and the vertices with copies in a second list $L_2$.
We remark that there are some crucial details of this process which we skip in the overview description here.

The algorithm uses the $CAdj$ array stored at the root of the LSDS for $L_2$, which specifically contains the connectivity information between the chunks in $L_2$ and all other chunks.
The MWR edge is found by scanning the $O(J)$ chunks of $L_1$, and for each such chunk $c$ we look at the entry  corresponding to $c$ in the $CAdj$ array of the root of $L_2$.
The chunk in $L_1$ that has the smallest entry in the $CAdj$ array contains the lightest edge connecting $L_1$ and $L_2$, and the algorithm scans all of the $O(K)$ edges touching vertices with copies in that chunk in order to find the MWR edge.
The entire process costs $O(J+K) = O(\sqrt {n\log n})$ worst-case time.

\paragraph{Parallel dynamic MSF.}
One of the advantages of our new sequential MSF algorithm is that it leads to an improved parallel dynamic MSF algorithm.
The reason for the time cost being $O(\sqrt {n \log n})$ in the sequential algorithm is due to scanning all of the elements in a chunk, scanning all of the chunks of a list, and scanning all of the vertices in an LSDS.
By utilizing several tournament like trees, we show how all of these tasks can be executed in parallel, with a cost of $O(\log n)$ parallel worst-case time.

\paragraph{Parallel sparsification.}
The sparsification method of Eppstein et al.~\cite{EppsteinGIN97} allows one to reduce the dependency of the sequential time cost for dynamic MSF in terms of the number of edges. In particular, this method admits a conversion of any algorithm for solving dynamic MSF with polynomial sequential time cost $f(m)$ to be reduced to a time cost of $f(n)$ by focusing on the special case in which $m=O(n)$.
Roughly speaking, the method uses a tree based data structure that has $O(\log n)$ levels where the number of edges associated with a tree vertex at level $i$ is $n/2^i$.
Each update necessitates at most one update at each level and so the total time cost becomes $\sum_{i=1}^{O(\log n)} f(n/2^i)$.

Unfortunately, the sequential sparsification does not transfer immediately to the parallel setting.
In the sequential algorithm of Section~\ref{sec:seqmin} the time cost on a graph with $O(n')$ edges is $f(n') = O(\sqrt{n'\log n'})$, and so using the sparsification method we have that $\sum_{i=1}^{O(\log n)} \sqrt{(n/2^i)\log (n/2^i)})= O(\sqrt{n\log n})$.
However, in the parallel algorithm, since $f(n') = O(\log n')$ we have that $\sum_{i=1}^{O(\log n)} \log(n/2^i)= O(\log^2n )$. This adds an $O(\log n)$ factor to the cost.

Ferragina~\cite{Ferragina95} introduced a parallel technique to apply the sparsification data structure of Eppstein et al.~\cite{DBLP:conf/focs/EppsteinGIN92} to any parallel algorithm solving the fully dynamic MSF problem, with a $O(\log{\frac{m}{n}})$ factor to the total work. However, we are interested in a parallel sparsification method that does not increase any of the asymptotic costs of the dynamic MSF data structure. Thus, we introduce an augmentation of the Eppstein et al.~\cite{EppsteinGIN97} sparsification method which achieves this goal. This augmentation requires a more detailed description of how the spasification data structure works, which we discuss in Section~\ref{sec:parallel_sparse}.

\subsection{Other Related Work}
\paragraph{Related work on sequential dynamic MSF.}
In the sequential setting, the fastest deterministic worst-case algorithm for dynamic MSF has an $O(\sqrt n)$ worst-case update time~\cite{Frederickson85,EppsteinGIN97}.
Henzinger and King~\cite{DBLP:conf/icalp/HenzingerK97} showed a deterministic algorithm for dynamic MSF with amortized update time of $O(n^{1/3} \log n)$.
Holm et al.~\cite{DBLP:journals/jacm/HolmLT01} designed another deterministic algorithm which has an amortized update time of $O(\log^4 n)$, and later Holm et al~\cite{DBLP:conf/esa/HolmRW15}
improved the amortized update time to $O(\frac{\log^4 n}{\log{\log n}})$.
When allowing randomization, Wulff-Nilsen~\cite{DBLP:conf/stoc/Wulff-Nilsen17} gave a randomized Las-Vegas algorithm for dynamic MSF with an expected worst-case update time of $O(n^{\frac{1}{2} - \epsilon})$ for some constant $\epsilon > 0$, and Nanongkai et al.~\cite{DBLP:conf/focs/NanongkaiSW17} gave a randomized Las-Vegas algorithm with an expected worst-case update time of $n^{O(\log \log \log n / \log \log n)}$.
\Patrascu{} et al.~\cite{DBLP:journals/siamcomp/PatrascuD06} proved a lower bound of $\Omega(\log n)$ update time for dynamic connectivity, which is also a lower bound for dynamic MSF.

\paragraph{Related work on sequential dynamic connectivity.}
Notice that an algorithm for dynamic MSF is also an algorithm for dynamic connectivity. However, often there are faster dynamic connectivity algorithms.
For the related dynamic connectivity problem, as mentioned above, the fastest deterministic worst-case time algorithm has an $O\left(\sqrt{\frac{n(\log{\log{n}})^2}{\log n}}\right)$ worst-case update time~\cite{Kejlberg-Rasmussen16}.
When allowing randomization, Kapron, King, and Mountjoy~\cite{KapronKM13}
gave a Monte Carlo randomized structure with update time $O(c\log^5 n)$ and one-sided error probability $n^{-c}$.
The update time was later improved to $O(c\log^4 n)$ independently by Gibb et al.~\cite{GibbKKT15} and Wang~\cite{DBLP:journals/corr/Wang15w}.
When allowing amortization
Wulff-Nilsen~\cite{Wulff-Nilsen13a} discovered a \emph{deterministic} data structure with
$O(\log^2 n/\log\log n)$ amortized update time.
Recently, Huang et al.~\cite{DBLP:conf/soda/HuangHKP17} showed that if both randomization and amortization are allowed then dynamic connectivity can be solved in expected $O(\log n(\log\log n)^2)$ amortized time. Recently, Nanongkai and Saranurak~\cite{DBLP:conf/stoc/NanongkaiS17} presented a randomized Las-Vegas algorithm for dynamic MSF with an expected worst-case update time of $O(n^{\frac{1}{2} - \epsilon})$ for some constant $\epsilon > 0$.

\subsection{Organization}
In Section~\ref{sec:seqmin} we give a formal detailed description of the sequential dynamic MSF algorithm proving Theorem~\ref{theorem:mstseq}.
In Section~\ref{sec:basic} we give a formal detailed description of the EREW PRAM dynamic MSF algorithm for sparse graphs, thereby proving Theorem~\ref{theorem:mstpar} for the special case of $m=O(n)$.
Due to space considerations, the description of how to extend the sparsification technique of Eppstein et al.~\cite{EppsteinGIN97} to the EREW PRAM model, without affecting the costs, is deferred to Section~\ref{sec:parallel_sparse}. This technique completes the proof of Theorem~\ref{theorem:mstpar} for general graphs.

\section{Sequential Dynamic MSF - Proof of Theorem~\ref{theorem:mstseq}}
\label{sec:seqmin}
The goal of this section is to present a deterministic sequential algorithm for dynamic MSF in \emph{sparse graphs}, where $m=O(n)$. Theorem~\ref{theorem:mstseq} is proved by applying the sparsification technique of Eppstein et al.~\cite{EppsteinGIN97} to the presented algorithm, allowing the algorithm to also work for general graphs, without changing the time cost.

\subsection{Preliminaries}
Each tree $T$ in the MSF is represented as an Euler tour $Euler(T)$, and each Euler tour (which is a list of edges) is stored as a list of vertices whose order is defined by the Euler tour such that every two adjacent vertices in the list represent an edge in the tour.
We say that an edge in the MSF is a \emph{tree edge}.
When a new edge $e=(u,v)$ is added to the graph, if $e$ connects two different trees in the MSF then $e$ becomes a tree edge, and the two corresponding Euler tours are merged. Otherwise, let $e'$ be the heaviest edge on the path between $u$ and $v$ in the MSF prior to the update. If $w(e)<w(e')$ then the algorithm removes $e'$ from the MSF and adds $e$ instead, thereby rearranging Euler tours.
When a non-tree edge is deleted, there are no changes to any of the Euler tours. However, when a tree edge is deleted, the algorithm looks for a \emph{Minimum Weight Replacement} (MWR) edge to reconnect the MSF, thereby rearranging Euler tours.
Thus, the task of rearranging Euler tours is reduced to \emph{surgical operations} of splitting or merging lists, as expressed in the following lemma (proven in \cite{Kejlberg-Rasmussen16}).

\begin{lemma}[\cite{Kejlberg-Rasmussen16}, Lemma 2.1]
	\label{lemma:surgicaldef}
	Let $G$ be an undirected graph with MSF $F$. Let $T \in F$ be a MST in $F$ that contains a tree edge $e$. If $e$ is deleted, and $T = T_{0} \cup e \cup T_{1}$, then Euler($T_{0}$) and Euler($T_{1}$) are constructed from Euler($T$) with $O(1)$ surgical operations. In the opposite direction, from Euler($T_{0}$) and Euler($T_{1}$), Euler($T_{0} \cup e \cup T_{1}$) is constructed with $O(1)$ surgical operations. It takes $O(1)$ worst-case time to determine which surgical operations to perform.
\end{lemma}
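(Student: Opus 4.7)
The plan is to exploit the well-known fact that an Euler tour of a tree decomposes naturally when a single edge is cut, and conversely that two Euler tours can be spliced together by inserting one new edge. To make the surgical operations run in $O(1)$ worst-case time, I would maintain, alongside each Euler tour list, an auxiliary pointer for each directed tree edge to the position in the list at which that directed edge occurs, and a pointer for each vertex $v$ to some occurrence of $v$ in the list. Since the list representation is a sequence of vertices with consecutive entries representing directed edges of the tour, every tree edge $e=(u,v)$ appears exactly twice in $\text{Euler}(T)$: once as a pair $(\ldots,u,v,\ldots)$ and once as $(\ldots,v,u,\ldots)$.

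For the deletion (split) direction, on input $e=(u,v)$ I would use the edge-to-position pointers to locate the two occurrences of $e$, say at positions $i<j$. The structural observation is that a DFS-style Euler traversal of $T$ that crosses $e$ at position $i$ walks the entire subtree $T_1$ containing $v$ and returns to $u$ only at position $j$; hence the sublist strictly between positions $i$ and $j$ is already a valid $\text{Euler}(T_1)$, while the two remaining fragments (the prefix before $i$ and the suffix after $j$) are precisely the pieces of a walk over $T_0$. Thus a constant number of list splits isolates $\text{Euler}(T_1)$, and one merge concatenates the two $T_0$ fragments into $\text{Euler}(T_0)$. For the insertion (merge) direction, I use the vertex-to-position pointers to find an occurrence of $u$ in $\text{Euler}(T_0)$ and of $v$ in $\text{Euler}(T_1)$, split each list at the chosen occurrence, and then concatenate the four resulting fragments in the order that inserts the two directed copies of $e$ at the splice points; this again is $O(1)$ surgical operations decidable in $O(1)$ time.

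The main thing to get right, and where I expect the bulk of the technical care to lie, is the precise patching at the splice points so that the resulting lists remain valid Euler tours in the vertex-list representation: namely, ensuring that every directed tree edge still appears exactly once at a pair of consecutive entries, that no spurious vertex duplication is introduced at a junction, and that the auxiliary pointers are updated consistently for the $O(1)$ directed edges and vertex occurrences that move. Beyond this bookkeeping, correctness follows directly from the structural decomposition of Euler tours described above, and the $O(1)$ worst-case time bound is immediate because the entire procedure consists of a constant number of pointer lookups, list splits, list merges, and pointer updates.
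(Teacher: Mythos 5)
The paper does not prove this lemma itself---it is imported verbatim from Kejlberg-Rasmussen et al.\ \cite{Kejlberg-Rasmussen16}---but your reconstruction is the standard Euler-tour surgery argument underlying it: cutting out the sublist between the two directed occurrences of $e$ yields Euler($T_1$) and the concatenated prefix/suffix yields Euler($T_0$), and splicing with two new arcs reverses this, all with $O(1)$ splits and concatenations located via stored edge/vertex position pointers. The one delicate point you flag (keeping vertex occurrences consistent at the splice junctions) is indeed the only real bookkeeping issue, and the paper explicitly accommodates it by permitting the lists to temporarily fail to be valid Euler tours during the sequence of surgical operations.
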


Notice that during the sequence of surgical operations (of splitting or merging lists) that take place during splitting or merging Euler tours (as expressed in Lemma~\ref{lemma:surgicaldef}), the lists of vertices may temporarily not be valid Euler tours.

The discussion above together with Lemma \ref{lemma:surgicaldef} reduces the dynamic MSF problem to the following three subproblems: (1) finding the heaviest edge between two given vertices in a dynamic forest, (2) finding a MWR edge, and (3) implementing surgical operations on lists. The first subproblem is solved by applying the dynamic tree data structure of Sleator and Tarjan~\cite{DBLP:journals/jcss/SleatorT83}, which costs $O(\log{n})$ worst-case time per forest update or path query. The rest of the description focuses on solving the last two subproblems. We emphasize that implementing the surgical operations on lists can be done in a straightforward manner costing $O(1)$ worst-case time per operation, but such an implementation does not seem to support an efficient implementation of finding a MWR edge.

\subsection{The Data Structure}\label{sec:seq_data_struct}
\paragraph{Principal copies.} For a graph vertex $u$ there could be several copies of $u$ in the list of vertices containing $u$  (since the purpose of the list is to represent an Euler tour). The algorithm designates one copy to be the \emph{principal copy} of $u$, denoted by $\pc{u}$, and vertex $u$ stores a bidirectional pointer to $\pc u$.

\paragraph{Chunks.} Each list of vertices is partitioned into consecutive \emph{chunks} of vertices, so that each chunk $c$ contains $O(K)$
vertices and there are $O(K)$ edges incident to vertices whose principal copy is in $c$, for a carefully chosen parameter $K$.
We say that an edge is \emph{adjacent to} or \emph{touches} chunk $c$ if the edge touches a graph vertex whose principal copy is in $c$.
The algorithm maintains Invariant~\ref{edgestochunk} for all chunks (recall that the maximum degree in the graph is 3):

\begin{invariant}
	\label{edgestochunk}
	For chunk $c$ let $n_{c}$ be the sum of the number of vertices in $c$ and the number of edges adjacent to $c$. Then $n_{c} \le 3K$, and if $c$ is not the only chunk in the list containing $c$ then $K\le n_{c}$.
\end{invariant}

Notice that when Invariant \ref{edgestochunk} is violated, a standard technique of merging and/or splitting $O(1)$ adjacent chunks is used in order to restore the invariant. We describe the process for merging and splitting chunks in the proof of Lemma~\ref{lem:chunksplitmerge}.

We assume for now that every list contains at least two chunks. The special case of lists containing only one chunk is addressed in Section~\ref{sec:onechunk}. Thus, we denote the maximum number of possible chunks by $J=O(n/K)$.
Each chunk $c$ is assigned a unique id $id_c\in [J]$, and each vertex in $c$ stores $id_c$.
In order to support quick lookups of chunks, the algorithm stores a $J$ sized array called $chunks$ such that  $chunks[id_c]=c$.
Each chunk $c$ maintains two $J$-length vectors called $CAdj_c$ and $Memb_c$.
$CAdj_c[id_{c'}]$ contains the minimum weight of an edge $(u,v)$ such that $\pc u$ is in $c$ and $\pc v$ is in $c'$. If no such edge exists then we denote $CAdj_c[id_{c'}] = \infty$. In the $Memb_c$ vector, all of the entries are set to 0 except for the entry at $id_c$ which is set to 1.

\paragraph{The LSDS.} For each list $L$, the data structure stores a \emph{list sum data structure} (LSDS) which is implemented as a 2-3 tree whose leaves correspond, in order, to the chunks of $L$. The LSDS supports logarithmic worst-case time inserts, deletes, splits and joins. Each internal vertex $z$ maintains two $J$-length vectors, $Memb_{z}$ and $CAdj_{z}$. $Memb_{z}$ is the entry-wise OR of all the $Memb$ vectors of chunks contained in leaves in the subtree of $z$. $CAdj_{z}$ is the entry-wise minimum of all the $CAdj$ vectors of chunks contained in leaves in the subtree of $z$, as shown in Figure~\ref{fig:lsds}.

In order to efficiently perform surgical operations on lists, we describe an efficient implementation for splitting and merging chunks (Section~\ref{sec:chunk_split_merge}) and an efficient implementation of LSDS operations (Section~\ref{sec:LSDS_oper}). We then use these implementations to show how to efficiently implement the surgical operations and how to find a MWR edge (Section~\ref{sec:surg_oper}).

\begin{figure}
	\centering
	\includegraphics[scale=0.65, trim=0.5cm 10cm 0.5cm 3.5cm]{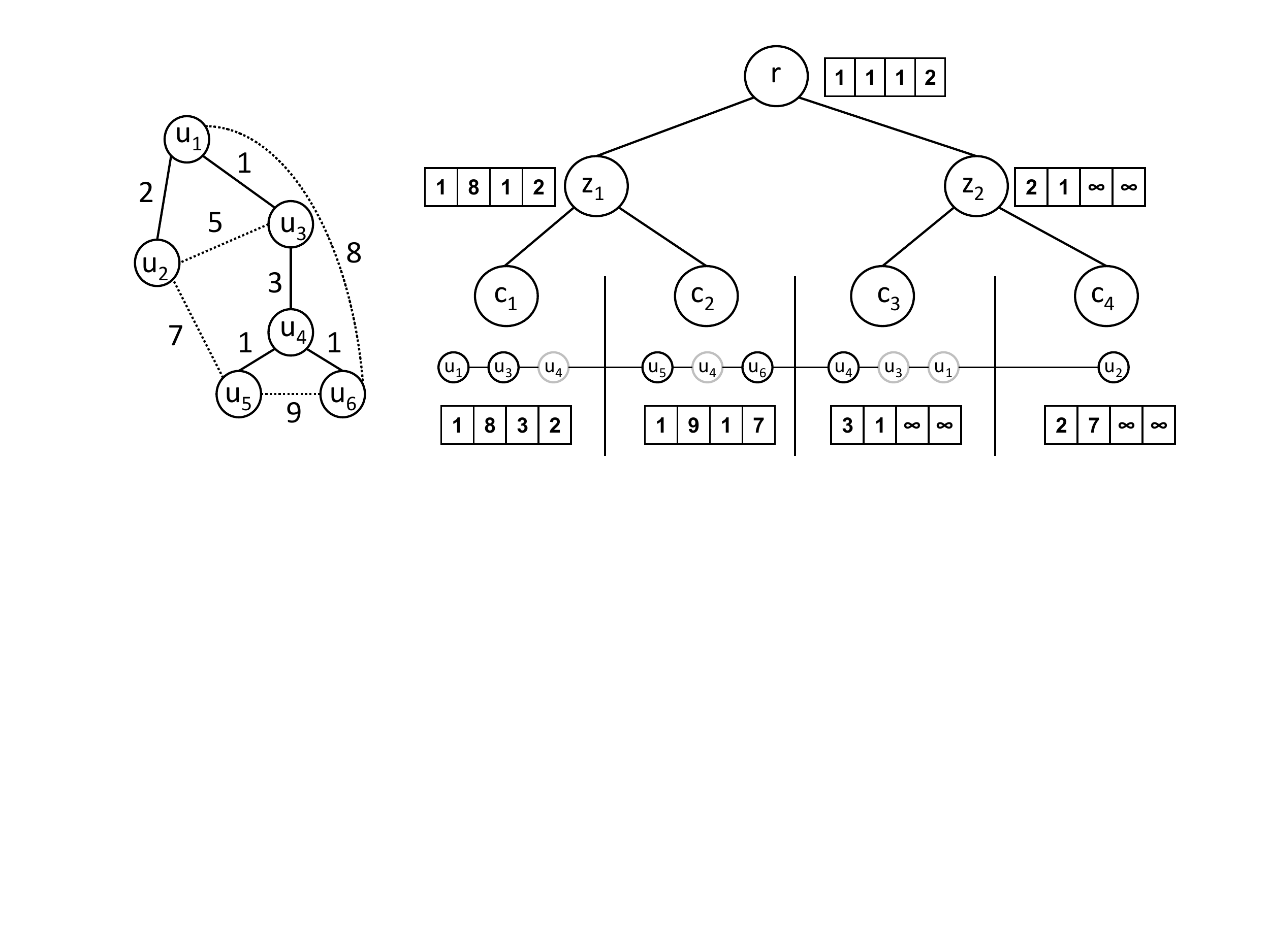}
	\caption{\small
The left part of the figure is a graph where the tree edges are solid edges and the non-tree edges are the dotted edges.
The right part of the figure is a LSDS tree whose leaves are chunks of the list of vertices representing the Euler tour for the graph.
Each chunk contains a sublist of the Euler tour.
The principal copies are the black vertices.
The arrays in each chunk and next to internal tree vertices are the $CAdj$ vectors. }
	\label{fig:lsds}
\end{figure}

\subsection{Splitting and Merging Chunks}\label{sec:chunk_split_merge}
\begin{lemma}
  \label{lem:chunksplitmerge}
  There exists a data structure on chunks that supports splits and merges such that each operation costs $O(J+ K)$ worst-case time.
\end{lemma}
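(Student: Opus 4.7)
The plan is to implement each split or merge in three phases: a constant-time surgery on the underlying list, a local recomputation of $CAdj$ and $Memb$ for every newly created chunk by scanning its $O(K)$ vertices and incident edges, and a single sweep over the $J$ slots of the $chunks$ array that patches the entries of the other chunks' $CAdj$ vectors referring to the new or removed chunk ids.

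For a merge of adjacent chunks $c_1, c_2$ into $c$, I would splice the two sublists in $O(1)$, allocate a fresh id $id_c$ from a free-list kept alongside $chunks$, and release $id_{c_1}, id_{c_2}$. I would initialize $CAdj_c$ to all-$\infty$ and $Memb_c$ to all-$0$ in $O(J)$ time, set $Memb_c[id_c]\leftarrow 1$, and walk the $n_c = O(K)$ vertices of $c$: each vertex has its stored id rewritten to $id_c$, and for every incident edge $(u,v)$ with $\pc{u}$ in $c$ the algorithm relaxes $CAdj_c[id_{c'}]$ against $w(u,v)$, where $c'$ is the chunk of $\pc{v}$ (looked up in $O(1)$ via $chunks$). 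I would then sweep the $J$ slots of $chunks$ and for every still-live chunk $c'$ set $CAdj_{c'}[id_c]\leftarrow CAdj_c[id_{c'}]$ and reset $CAdj_{c'}[id_{c_1}], CAdj_{c'}[id_{c_2}]$ to $\infty$. The split case is symmetric: after severing the list in $O(1)$, two fresh ids are allocated, each of the two new chunks is rebuilt by the same $O(J+K)$ local scan, and one $O(J)$ pass over $chunks$ replaces the column $id_c$ by the two columns $id_{c_1}, id_{c_2}$ in every other chunk's $CAdj$. The ``borrow from neighbour'' rebalancing used when $n_c$ drops below $K$ is realized as a constant number of such merges and splits and therefore also fits inside the claimed bound.

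The step I expect to be the main obstacle, and the reason the cost is $O(J+K)$ rather than $O(JK)$, is the final sweep: updating the $CAdj$ entry of every other chunk for the new chunk $c$ would naively seem to require knowing the minimum-weight edge between $c'$ and $c$ for each $c'$, which looks like a global rescan. The key observation is that $CAdj_{c'}[id_c]=CAdj_c[id_{c'}]$ by symmetry, so once $CAdj_c$ has been built from the $O(K)$ edges locally incident to $c$, each of the $O(J)$ symmetric entries across the rest of the data structure can be written in $O(1)$ using the $chunks$ array. The remaining concerns — maintaining a free-list of ids and rewriting the stored id at the $O(K)$ vertices of any chunk whose id changed — are pure bookkeeping and are subsumed by the $O(J+K)$ bound.
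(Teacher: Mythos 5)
Your proposal is correct and follows essentially the same route as the paper's proof: $O(1)$ list surgery, an $O(K)$ local rescan of the affected chunk's vertices and incident edges to rebuild its $CAdj$ vector, and a single $O(J)$ pass that propagates the symmetric entries $CAdj_{c'}[id_c]=CAdj_c[id_{c'}]$ to all other chunks, which is exactly the observation the paper relies on. The only (immaterial) differences are that the paper reuses $id_{c_1}$ for the surviving chunk instead of drawing a fresh id, and it notes explicitly that when a split is triggered by a violation of Invariant~\ref{edgestochunk} the split position itself is found by an $O(K)$ scan of the chunk.
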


\noindent\begin{proof}
\emph{Splitting.}
Splitting a chunk $c$ can happen for one of two reasons: (1) either the list containing $c$ needs to be split at a given vertex $u$ that is in $c$, or (2) $n_c>3K$ thereby violating Invariant~\ref{edgestochunk}. In the second case, the split location is located by scanning $c$ in $O(K)$ worst-case time. Thus, we assume from now that the algorithm knows the split location $u$.

Splitting the list of vertices in $c$ at vertex $u$ takes $O(1)$ worst-case time.
Let $c_{1}$ and $c_{2}$ be the resulting chunks where $c_1$ contains the first part of the list of vertices from $c$ and $c_2$ contains the second part.
The algorithm sets $id_{c_{1}} = id_{c}$ and allocates a new (unique) id for $c_{2}$.
Next, the algorithm scans all of the vertices in $c_2$ and updates their chunk id.
The new $CAdj$ arrays for $c_{1}$ and $c_2$ are created by iterating over all edges adjacent to $c_{1}$ and $c_2$, respectively, in $O(K)$ worst-case time.
Finally, for each chunk $c'$, the algorithm updates $CAdj_{c'}[id_{c_1}]$ to be $CAdj_{c_1}[id_{c'}]$ and  $CAdj_{c'}[id_{c_2}]$ to be $CAdj_{c_2}[id_{c'}]$, which takes $O(J)$worst-case time.
Thus, the cost for splitting a chunk is $O(J+K)$ worst-case time.

\emph{Merging.}
Merging the lists of vertices of adjacent chunks $c_{1}$ and $c_{2}$ takes $O(1)$ worst-case time. Let $c$ denote the resulting chunk containing the concatenation of the two lists.
The algorithm sets $id_{c} = id_{c_{1}}$, and in $O(K)$ worst-case time the algorithm scans all of the vertices in $c$ in order to update their chunk id.
The new $CAdj$ array for $c$ is created by iterating over all edges incident to $c$ in $O(K)$ worst-case time.
Finally, for each chunk $c'$, the algorithm updates $CAdj_{c'}[id_{c}]$ to be $CAdj_{c}[id_{c'}]$ and sets $CAdj_{c'}[id_{c_2}] = \infty$, which takes $O(J)$ worst-case time.
Thus, the cost for merging two adjacent chunks $O(J+K)$ worst-case time.
\end{proof}

\subsection{Implementing The LSDS Operations}\label{sec:LSDS_oper}
The LSDS supports the following operations:
\begin{itemize}\compactify
	\item \texttt{LSInsert$(c,c')$} - Add a new leaf for chunk $c$ after chunk $c'$.
	\item \texttt{LSDelete$(c)$} - Destroy chunk $c$.
	\item \texttt{LSJoin$(LS_1, LS_2)$} - Concatenate two lists of chunks represented by LSDS $LS_1$ and LSDS $LS_2$.
	\item \texttt{LSSplit$(c)$} - Split the LSDS at chunk $c$.
	\item \texttt{UpdateAdj$(c)$} - (Takes place immediately after an update to $CAdj_c$ and $CAdj_{c'}[id_c]$ for all chunks $c'\neq c$.) Update the $CAdj$ vectors for ancestors of $c$ in the LSDS, and update the $id_c$'th entry of the $CAdj$ vectors for all of the vertices in the LSDS.
\end{itemize}

\begin{lemma}
	\label{lem:lsdsseq}
	There exists an implementation of the LSDS where each of the operations \texttt{LSInsert}, \texttt{LSDelete}, \texttt{LSJoin}, \texttt{LSSplit} and \texttt{UpdateAdj} take $O(J\log{J})$ worst-case time.
\end{lemma}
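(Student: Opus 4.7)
The plan is to implement the LSDS as a standard 2-3 tree on the chunks, and to exploit two facts: (i) each structural 2-3 tree operation (insert, delete, split, join) modifies the tree or the path to an affected leaf at only $O(\log J)$ vertices, and (ii) each of the maintained vectors $Memb_z, CAdj_z$ has length $J$ and can be recomputed from a node's (at most three) children in $O(J)$ time by taking the entry-wise OR and entry-wise min, respectively.

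For \texttt{LSInsert}, \texttt{LSDelete}, \texttt{LSJoin} and \texttt{LSSplit}, I would first apply the textbook 2-3 tree algorithm to perform the structural change, producing a set $S$ of at most $O(\log J)$ tree vertices whose child-multiset was altered (this includes both vertices on the root-to-leaf insertion/deletion path and the constant number of nodes created or rebalanced at each level, as well as, for join/split, the $O(\log J)$ nodes along the spine where the two trees meet or part). Then, walking bottom-up over $S$, I recompute $Memb_z$ and $CAdj_z$ from the (updated) children of $z$ in $O(J)$ time per vertex; any vertex outside $S$ has the same children and the same aggregate vectors as before and so needs no work. This gives $O(J\log J)$ worst-case time per operation.

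For \texttt{UpdateAdj}$(c)$, I would split the work into two disjoint regions of the LSDS. Let $A$ be the set of proper ancestors of $c$'s leaf, so $|A|=O(\log J)$. Because $CAdj_c$ itself changed, in principle every entry of $CAdj_z$ for $z\in A$ may need to change; I handle this by a bottom-up pass along $A$ that fully recomputes $CAdj_z$ as the entry-wise minimum of its children's $CAdj$ vectors, costing $O(J)$ per ancestor and $O(J\log J)$ in total. For any non-ancestor $z$, the subtree of $z$ contains no copy of $c$, so among the chunk-level changes only the $id_c$-th entries of the vectors $CAdj_{c'}$ could possibly affect $CAdj_z$; therefore only $CAdj_z[id_c]$ needs updating. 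A single bottom-up scan of the whole LSDS that recomputes $CAdj_z[id_c]$ from the $id_c$-th entries of the children takes $O(1)$ per vertex and $O(J)$ total, since the tree has $O(J)$ vertices. Adding the two regions gives $O(J\log J)$.

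The main obstacle, and the reason the analysis works out, is carefully separating the ancestors of $c$ (for which a full $O(J)$ recomputation is unavoidable, but whose number is only $O(\log J)$) from the non-ancestors (of which there are $\Theta(J)$, but which require only an $O(1)$ single-entry update). The remaining routine points to check are that 2-3 tree joins and splits really only touch $O(\log J)$ vertices whose children change, so that the structural operations do not sneak in an extra $\log$ factor, and that leaves (chunks) are by assumption already consistent when \texttt{UpdateAdj} begins, so that the bottom-up passes can start directly at the parents of leaves.
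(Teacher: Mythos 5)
Your proposal is correct and follows essentially the same route as the paper's proof: $O(\log J)$ touched vertices times $O(J)$ per-vertex vector recomputation for the structural operations, and for \texttt{UpdateAdj} the same split into a full $O(J)$-per-node recomputation along the $O(\log J)$ ancestors of $c$ plus an $O(1)$-per-node update of only the $id_c$-th entry over all $O(J)$ tree vertices. Your explicit justification of why non-ancestors need only their $id_c$-th entry touched is slightly more detailed than the paper's, but the argument and the bounds are the same.
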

\begin{proof}

\emph{All operations except for \texttt{UpdateAdj}.}		
Basic tree operations on the LSDS, including access, insertion, deletion, splitting and joining cost $O(\log{J})$ worst-case time each (since each LSDS supports at most $J$ chunks).
Thus, each basic tree operation touches at most $O(\log{J})$ vertices in the tree.
For every vertex $z$ in the LSDS, updating a \emph{single} entry in $CAdj_z$ or $Memb_z$ costs $O(1)$ worst-case time (since the number of children is $O(1)$), and so updating the $CAdj$ and $Memb$ arrays for the $O(\log J)$ vertices touched during a basic tree operation takes at most $O(J\log{J})$ worst-case time.
	
\emph{Operation \texttt{UpdateAdj}.}
Updating the $CAdj$ vectors in the path from the leaf representing $c$ to the root takes $O(J \log{J})$ worst-case time. Updating the $id_c$'th entry of every $CAdj$ array in the tree takes $O(J)$ worst-case time (since the 2-3 tree contains at most $O(J)$ vertices). Thus, the total cost of \texttt{UpdateAdj} is $O(J \log{J})$ worst-case time.
\end{proof}

\subsection{Surgical Operations}\label{sec:surg_oper}

\begin{lemma}
  	\label{lem:surgicalimpl}
	There exists an algorithm in which each surgical operation on lists costs $O(J\log J + K)$ worst-case time and finding a MWR edge costs $O(J+K)$ worst-case time.
\end{lemma}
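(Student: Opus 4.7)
The plan is to handle the two claims separately.

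\emph{Surgical operations.} By Lemma~\ref{lemma:surgicaldef}, each Euler tour update is composed of $O(1)$ surgical operations, each of which is either a split of a list at a specified vertex or a concatenation of two lists. For each split, I would first invoke the chunk split of Lemma~\ref{lem:chunksplitmerge} at the designated vertex so that the split point aligns with a chunk boundary; this costs $O(J+K)$, and each of the two resulting chunks then requires an \texttt{UpdateAdj} call, which by Lemma~\ref{lem:lsdsseq} costs $O(J\log J)$. Next I would split the LSDS at that boundary, also $O(J\log J)$. For a concatenation, the LSDS join is $O(J\log J)$ and no chunk split is needed; however, the chunks at the seam may now violate Invariant~\ref{edgestochunk} because one of them may have grown too small (or, after merging the two seam chunks, too large), which I would repair by a constant number of merges/splits of adjacent chunks, each at cost $O(J+K)$ plus a subsequent $O(J\log J)$ \texttt{UpdateAdj}. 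Summing over the $O(1)$ surgical operations and $O(1)$ invariant repairs gives $O(J\log J + K)$ worst-case time per surgical operation.

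\emph{Finding a MWR edge.} Let $L_1$ and $L_2$ be the two lists produced when the deleted tree edge disconnects a tree, and let $r_1,r_2$ be the roots of their LSDSs. Since $CAdj_{r_2}[id_c]$ is by definition the minimum weight of any edge between some chunk of $L_2$ and the chunk $c$, the lightest edge crossing the cut $(L_1,L_2)$ has its $L_1$-endpoint's principal copy in the chunk $c^{\star}$ of $L_1$ that minimizes $CAdj_{r_2}[id_c]$. I would locate $c^{\star}$ by an in-order traversal of the (at most $J$) leaves of the LSDS of $L_1$, reading one entry of $CAdj_{r_2}$ at each, for a total of $O(J)$. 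Then I would scan the $O(K)$ edges incident to $c^{\star}$; for each such edge $(u,v)$ with $\pc u \in c^{\star}$, I would look up the chunk $c'$ containing $\pc v$ and test $Memb_{r_2}[id_{c'}] = 1$ in $O(1)$ time, retaining the lightest qualifying edge. This locates the MWR edge in $O(J+K)$ worst-case time.

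The main obstacle I anticipate is controlling the cascade of chunk modifications triggered by a single surgical operation: I must argue that only $O(1)$ chunk merges and splits (and therefore only $O(1)$ \texttt{UpdateAdj} invocations) are required to restore Invariant~\ref{edgestochunk}, because otherwise the $O(J\log J)$ factor would be amplified into the dominant cost. A secondary point requiring care is that the $CAdj$ and $Memb$ vectors at $r_2$ used during the MWR query must be fully consistent at that moment; this is ensured because every earlier chunk modification in the update has been followed immediately by the corresponding \texttt{UpdateAdj}, so by the time MWR is queried the aggregated arrays at $r_2$ already reflect the current state of the graph.
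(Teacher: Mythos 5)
Your proposal is correct and follows essentially the same route as the paper: splitting the relevant chunk, invoking \texttt{UpdateAdj} and the LSDS split/join, and repairing Invariant~\ref{edgestochunk} with $O(1)$ chunk merges/splits, then finding the MWR edge by combining one root's $CAdj$ vector with the other root's membership information to isolate a single chunk and scanning its $O(K)$ incident edges. The only (immaterial) difference is that you select the minimizing chunk on the $L_1$ side using $CAdj_{r_2}$, whereas the paper symmetrically masks $CAdj_{r_1}$ with $Memb_{r_2}$ to select a chunk of $L_2$; both yield the same $O(J+K)$ bound.
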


\noindent\begin{proof}
\noindent\emph{Splitting a list.}
Suppose we split list $L$ at vertex $u$ of chunk $c$ into two parts, $L_{1}$ and $L_{2}$. Let $LS$ be the LSDS representing $L$.
The algorithm splits $c$ at vertex $u$ into $c_{1}$ and $c_{2}$, inserts $c_2$ into $LS$ after the leaf representing $c_1$ and calls \texttt{UpdateAdj}($c_1$) and \texttt{UpdateAdj}($c_2$) in order to update all $CAdj$ vectors in $LS$. Next, the algorithm splits $LS$ into $LS_{1}$ and $LS_{2}$, where the last chunk of $LS_{1}$ is $c_{1}$ and the first chunk of $LS_{2}$ is $c_{2}$. If Invariant \ref{edgestochunk} is violated at either $c_{1}$ or $c_{2}$, then the algorithm executes $O(1)$ splits and merges (followed by $O(1)$ LSDS operations) on $c_{1}$ or $c_{2}$ together with their adjacent chunks in $L_{1}$ or $L_{2}$, respectively, thereby preserving Invariant~\ref{edgestochunk}.  Thus, splitting a list costs $O(J \log{J}+K)$ worst-case time.

\emph{Joining two lists.}
Suppose we join two lists, $L_{1}$ and $L_{2}$, into a single list $L$. Let $LS_{1}$ ($LS_{2}$) be the corresponding LSDS of $L_{1}$ ($L_{2}$). The algorithm calls \texttt{LSJoin}($LS_{1}$, $LS_{2}$) to merge $LS_{1}$ and $LS_{2}$ into $LS$, which costs $O(J \log{J})$ worst-case time.

\emph{Finding a MWR edge.}
Notice that the algorithm looks for a MWR edge between two Euler tours $ET_{1}$ and $ET_{2}$ \emph{only immediately} after splitting Euler tour $ET$ into $ET_{1}$ and $ET_{2}$. Let $LS_{1}$ ($LS_{2}$) be the LSDS corresponding to the list of $ET_{1}$ ($ET_{2}$). Let $r_1$ and $r_2$ be the roots of $LS_1$ and $LS_2$, respectively.

The algorithm constructs an array $\gamma$ of length $J$ in $O(J)$ worst-case time such that if $Memb_{r_2}[i] = 0$ then $\gamma[i]= \infty$, and otherwise $\gamma[i] = CAdj_{r_1}[i]$.
Thus, $\gamma[id_c]< \infty$ if and only if there exists some edges between vertices in $ET_1$ and vertices in chunk $c$ (which must be in $ET_2$).
Moreover, if $\gamma[id_c]< \infty$ then the weight of the minimum weight edge between $ET_1$ and chunk $c$ is $\gamma[id_c]$.
Let ${\hat c}=\arg \min_{\text{chunk } c}\{\gamma[id_c]\}$.
Thus, the minimum weight edge between $ET_1$ and $ET_2$ touches a graph vertex $u$ such that $\pc{u} \in \hat c$.
The algorithm computes $id_{\hat c}$ in $O(J)$ worst-case time by scanning $\gamma$ and looking for the smallest entry.
Then, the algorithm scans all of the $O(K)$ edges touching $\hat c = chunks[id_{\hat c}]$, and for each such edge $e=(u,v)$ where $\pc u \in \hat c$, the algorithm verifies whether the chunk $c_v$ containing $\pc v$ is in $LS_1$ or not by looking at $Memb_{r_1}[id_{c_v}]$. Finally, the algorithm picks the lightest edge that passes the verification.
Thus, the total cost of finding the MWR edge is $O(J+K)$ worst-case time.
\end{proof}

\subsection{Graph Updates}
\begin{proof}[Proof of Theorem~\ref{theorem:mstseq}]
\emph{Edge insertion.}
Suppose we insert a new edge $e=(u,v)$ with weight $w(e)$ to the graph.
Let $c_{1}$ and $c_{2}$ be the chunks containing $\pc u$ and $\pc v$, respectively, and let $LS_1$ and $LS_2$ be the LSDSes containing $c_1$ and $c_2$, respectively.
The algorithm begins by updating $CAdj_{c_1}[id_{c_2}]$ and $CAdj_{c_2}[id_{c_1}]$.
Next, the algorithm calls $\texttt{UpdateAdj}(c_1)$ and $\texttt{UpdateAdj}(c_2)$ in order to update the $CAdj$ vectors in $LS_1$ and $LS_2$.
In case of a violation to Invariant~\ref{edgestochunk}, the algorithm executes $O(1)$ splits and merges on $c_1$ or $c_2$ together with their respective adjacent chunks, followed by $O(1)$ LSDS operations.

If $LS_1\neq LS_2$ then $u$ and $v$ are in different Euler tours, and so by Lemma~\ref{lemma:surgicaldef} a series of $O(1)$ surgical operations takes place in order to merge the two Euler tours containing $u$ and $v$ into a single Euler tour. The algorithm also adds $e$ to the dynamic tree structure of Sleator and Trajan~\cite{DBLP:journals/jcss/SleatorT83} in $O(\log n)$ worst-case time.

If $LS_1 = LS_2$, then $u$ and $v$ are in the same Euler tour. In this case, the algorithm uses the dynamic tree structure to locate the heaviest edge $e'$ on the path from $u$ to $v$ in the current MSF. Finding $e'$ takes $O(\log n)$ worst-case time.
If $w(e) < w(e')$ then the algorithm removes $e'$ from the MSF, inserts $e$ into the MSF, and updates the dynamic tree structure which costs $O(\log n)$ worst-case time.
Thus, the total worst-case time for inserting an edge is $O(J \log{J}+K+\log n)$.

\emph{Edge deletion.}
Suppose we delete edge $e=(u,v)$ from the graph. Let $c_{1}$ be the chunk containing $\pc u$ and let $c_{2}$ be the chunk containing $\pc v$. Let $LS$ be the LSDS containing $c_{1}$ and $c_{2}$.
The algorithm begins by updating $CAdj_{c_1}[id_{c_2}]$ and $CAdj_{c_2}[id_{c_1}]$ in $O(K)$ worst-case time by scanning all edges touching $c_{1}$. Next, the algorithm calls \texttt{UpdateAdj}($c_{1}$) on $LS$ to update all $CAdj$ vectors in $LS$.
If $e$ is a tree edge, then the algorithm first removes $e$ from the dynamic tree structure in $O(\log n)$ worst-case time, and then executes a series of $O(1)$ surgical operations in order to split the Euler tour containing $u$ and $v$.
Let $ET_u$ and $ET_v$ be the resulting two Euler tours containing $u$ and $v$, respectively.
Finally, the algorithm looks for a MWR edge between $ET_u$ and $ET_v$ in $O(J+K)$ worst-case time, and if such an edge $e'$ is found, then the algorithm adds $e'$ to the dynamic tree structure and executes another series of $O(1)$ surgical operations reconnecting $ET_{u}$ and $ET_{v}$.
Thus, the cost of deleting an edge is $O(J \log{J}+K+\log n)$ worst-case time.

\emph{Time cost.} Recall that $J=O(n/K)$. By setting $K = O(\sqrt{n \log n})$, the insertion and deletion costs become $O(J \log{J}+K+\log n) = O(\sqrt{n\log{n}})$ worst-case time.
\end{proof}

\section{Parallel Dynamic MSF on Sparse Graphs}
\label{sec:basic}
		
In this section we prove the following theorem.
\begin{theorem}
	\label{theorem:mstpar_sparse}
	There exists a deterministic algorithm for the dynamic MSF problem in the EREW PRAM model on sparse graphs with $m=O(n)$ edges that uses $O(\sqrt{n})$ processors and has a parallel worst-case update time of $O(\log{n})$. The resulted work of the algorithm is $O(\sqrt{n} \log{n})$.
\end{theorem}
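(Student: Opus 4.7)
The plan is to parallelize the sequential algorithm of Theorem~\ref{theorem:mstseq} by rebalancing the parameters to $K=J=\Theta(\sqrt n)$, so that the chunk size, the maximum number of chunks, and the processor budget all coincide. Under this choice every $O(K)$ scan and every update touching all $J$ entries of a $CAdj$ or $Memb$ vector becomes a one-task-per-processor parallel step, and every min/OR aggregation over $\Theta(\sqrt n)$ values is executed in $O(\log n)$ EREW PRAM time via a tournament tree of depth $O(\log n)$.

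First, I would re-implement Lemma~\ref{lem:chunksplitmerge} in parallel. The scan of the (at most $O(K)$) vertices of a freshly split or merged chunk $c$ is carried out by $\sqrt n$ processors in $O(1)$ steps; rebuilding $CAdj_c$ amounts to computing, for each target chunk id, the minimum-weight incident edge to $c$, which is one tournament-tree reduction of depth $O(\log n)$. The cross-updates $CAdj_{c'}[id_c]\leftarrow CAdj_c[id_{c'}]$ for every other chunk $c'$ use one processor per destination; to respect exclusive reads, the scalar $id_c$ is first propagated to all processors by a replication tree in $O(\log n)$. Next, for Lemma~\ref{lem:lsdsseq}, basic 2-3 tree operations on the LSDS touch $O(\log n)$ tree vertices on a single root-to-leaf path. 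I would process these vertices bottom-up, spending one parallel phase per level: at each level a single $CAdj$ or $Memb$ vector of length $\sqrt n$ is recomputed as the entry-wise min (resp.\ OR) of its $O(1)$ children, which one processor per index performs in $O(1)$ time. The column update inside \texttt{UpdateAdj}$(c)$ (writing the $id_c$-th entry across the whole LSDS) is analogously done bottom-up in $O(\log n)$ depth, with one processor per LSDS vertex per level.

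To find a MWR edge between Euler tours $ET_1$ and $ET_2$ as in Lemma~\ref{lem:surgicalimpl}, I would construct the filtered array $\gamma$ entry-wise with $\sqrt n$ processors in $O(1)$, take $\arg\min\gamma$ with an $O(\log n)$-depth tournament tree, broadcast $id_{\hat c}$ in $O(\log n)$, then verify the $O(K)$ edges incident to $\hat c$ in parallel (each processor reads its own private copy of the two root $Memb$ bits) and finally reduce to the lightest valid candidate in a second $O(\log n)$ tournament. The path queries for the heaviest edge on a tree path and the subsequent Sleator--Tarjan maintenance can be executed by a single processor in $O(\log n)$, which fits both the EREW model and the time budget; the $O(1)$ surgical operations dictated by Lemma~\ref{lemma:surgicaldef} then decompose into a constant number of chunk and LSDS operations already analyzed.

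The main obstacle I expect is simultaneously honoring the EREW restriction and the $O(\log n)$ time budget whenever many processors need the \emph{same} scalar (such as $id_c$ or $id_{\hat c}$) or when several logical tasks would naturally write into the same cell. Every such broadcast and every such combine has to be routed explicitly through an $O(\log n)$-depth binary tree of private copies, and the global schedule must avoid any phase that issues concurrent reads of a single memory word across the sub-tasks running in parallel. Once these mechanical details are set, each parallel surgical operation costs $O(\log n)$ time with $\sqrt n$ processors, which plugged into the insertion/deletion driver of Section~\ref{sec:seqmin} yields the claimed $O(\log n)$ worst-case parallel update time and $O(\sqrt n\log n)$ total work.
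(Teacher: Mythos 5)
Your overall plan coincides with the paper's: set $K=\Theta(\sqrt n)$ so that $J=O(\sqrt n)$, assign one processor per $CAdj$/$Memb$ index, replace every linear scan by an $O(\log n)$-depth tournament or broadcast tree, and split the LSDS so that each entry index is owned by its own tree that is updated level-by-level. The MWR-edge routine and the treatment of the Sleator--Tarjan queries are also the same as in Section~\ref{sec:basic}.

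There is, however, a concrete gap at the chunk level. You assert that the vertices of a freshly split or merged chunk can be scanned ``by $\sqrt n$ processors in $O(1)$ steps'' and that $CAdj_c$ is then rebuilt by ``one tournament-tree reduction.'' A chunk is a linked sublist of the Euler tour, and in the EREW model $3K$ processors cannot each locate a distinct element (let alone a distinct \emph{incident edge}, of which each principal copy has up to three) in constant time: they would all have to start from the single entry pointer to $c$, and the edge-to-processor correspondence depends on a prefix count of edge multiplicities along the list. The paper resolves exactly this with the balanced trees $\bt c$ carrying edge counters $\ec v$ and the $\getedge{c}{k}$ operation, which fans the processors out in $O(\log K)$ phases while guaranteeing that distinct processors land on distinct edges. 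A second omission is in the reduction itself: the edges touching $c$ must be grouped by \emph{destination} chunk before taking minima, and several edges with the same destination would otherwise race to write the same cell; the paper's $J$ tournament trees $T_1,\dots,T_J$ together with the four-phase exclusive-assignment iteration exist precisely to perform all $J$ groupwise minima simultaneously with $O(K)$ processors and no read or write conflict. Both mechanisms still fit in $O(\log n)$ time, so your final bounds are attainable, but as written the $O(1)$ claim is false in EREW and the machinery that makes the chunk operations work is missing from your argument.
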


In Section~\ref{sec:parallel_sparse} we show how to extend Theorem~\ref{theorem:mstpar_sparse} to work for general graphs, thereby proving Theorem~\ref{theorem:mstpar}.

\paragraph{The data structure.}
The algorithm uses the same data structure as described in Section~\ref{sec:seq_data_struct} with three changes.
The first change is that for each chunk $c$, the list of vertices in $c$ is augmented with a balanced 2-3 tree, denoted by $\bt c$, whose leaves are elements of the list that are in $c$. The height of $\bt c$ is $O(\log K)$.
Each vertex $v$ in $\bt{c}$ stores an \emph{edge counter} $ec_v$ which is the total number of edges incident to graph vertices whose principal copy is in the subtree of $v$; see Figure \ref{fig:chunk}.
The order of leaves in $\bt{c}$ together with an order of the at most 3 edges incident to each graph vertex whose principal copy is in $c$ defines an order on the edges touching $c$.

The second change is due to the requirements from the EREW PRAM model.
In particular, for any chunk $c$, we cannot support constant worst-case time access to the entries of $CAdj_c$ (or $Memb_c$) in parallel through a single pointer from $c$ to the array $CAdj_c$, due to the exclusive reading requirement.
Instead, we use a two dimensional matrix $C$ of size $J\times J = O(n)$ (at the end of this section we set $K=\sqrt n$) such that the entries of the $j$'th row of $C$ are exactly the entries of $CAdj_c$ where $id_c=j$. From now on, we let $CAdj_c[i]$ denote $C[id_c,i]$.
We also use the same exact method for $Memb$ arrays.

The third change, which is also due to the exclusive reading requirement, is in the $CAdj$ and $Memb$ arrays in the LSDS.
Instead of using one tree LSDS $LS$, we now use $J$ trees $S_1,S_2,\ldots,S_J$ for each LSDS, where the $j$'th tree corresponds to the chunk with id $j$.
For chunk $c$, the $id_c$'th leaf of $S_j$ contains both $CAdj_c[j]$ and $Memb_c[j]$.
We also store a pointer from $CAdj_c[j]$ and $Memb_c[j]$ to the $id_c$'th leaf of $S_j$, thereby providing direct access to that leaf.
Finally, in order to provide direct access to the root of each $S_j$, we store a matrix of size $J\times J = O(n)$ where the $(j,i)$ entry contains a pointer to the root of $S_i$ used in the $j$'th LSDS.

As in the sequential algorithm, the special case of lists containing only one chunk in the new parallel algorithm is addressed in Section~\ref{sec:onechunk}.

\paragraph{Assigning edges.} Our algorithm will often perform the task of assigning a different processor to each edge touching chunk $c$.
This assignment is implemented by a parallel operation $\getedge{c}{k}$ in which processor $p_k$ accesses the $k$'th edge incident to chunk $c$.
The operation $\getedge{c}{k}$ uses the edge counters in $\bt{c}$ together with an array $vertex$ of size $3K$ where each entry is a pointer to a vertex in $\bt c$.
We describe the implementation from the perspective of processor $p_k$ for $1\le k \le 3K$.
We emphasize that in order to implement $\getedge{c}{k}$, only $p_1$ will require access to $c$

Let $root_c$ be the root of $\bt c$.
The implementation has $h$ phases where $h=O(\log K)$ is the height of $\bt c$.
Processor $p_k$ participates in the $i$'th phase if and only if $vertex[k]\neq NULL$ at the beginning of the $i$'th phase.
Moreover, the participating processors in each phase are assigned to different vertices in $\bt c$.
In particular, processor $p_k$ is assigned to the vertex $v=vertex[k]$ with the guarantee that the rank of the rightmost edge in the subtree of $v$ is $k$.

To initialize the process, each $p_k$ sets $vertex[k]=NULL$ and if $k=1$ then $p_k$ sets $vertex[ec_{root_c}] = root_c$.
Now we begin the phases for $i=1,2,\ldots,h$.
For the $i$'th phase, if $vertex[k]=v$, then $p_k$ accesses the at most 3 children of $v$ and looks at their edge counters.
Based on these edge counters, $p_k$ computes in constant worst-case time the rank of the rightmost edge in each one of the subtrees of the children of $v$.
If the rightmost edge in the subtree of a child $u$ of $v$ is $r$, then $p_k$ sets $vertex[r]=u$.
Notice that $p_k$ necessarily sets $vertex[k]$ to be the rightmost child of $v$.
After $h$ phases all of the vertices in $vertex$ are leaves of $\bt c$, but some of the entries of $vertex$ may still be set to $NULL$.
An entry $vertex[k]=NULL$ can occur due to one of two reasons: either $k$ is larger than the number of edges touching $c$, or the principal copy of the edge that $p_k$ is accessing is also the principal copy of another edge which is being accessed by a different processor.
However, due to the invariant that the rank of the rightmost edge in the subtree of $vertex[k]$ is $k$ and the fact that the maximum degree in the graph is 3, the principal copy that $p_k$ is looking for is either in $vertex[k+1]$ or $vertex[k+2]$.
Thus, within 3 more steps, $p_k$ is able to access the principal copy and complete the task.
Thus, the operation $\getedge{c}{k}$ costs $O(\log{K})$ worst-case time.

\begin{figure}
	\includegraphics[scale=0.8, trim=3cm 14cm 4cm 3.3cm]{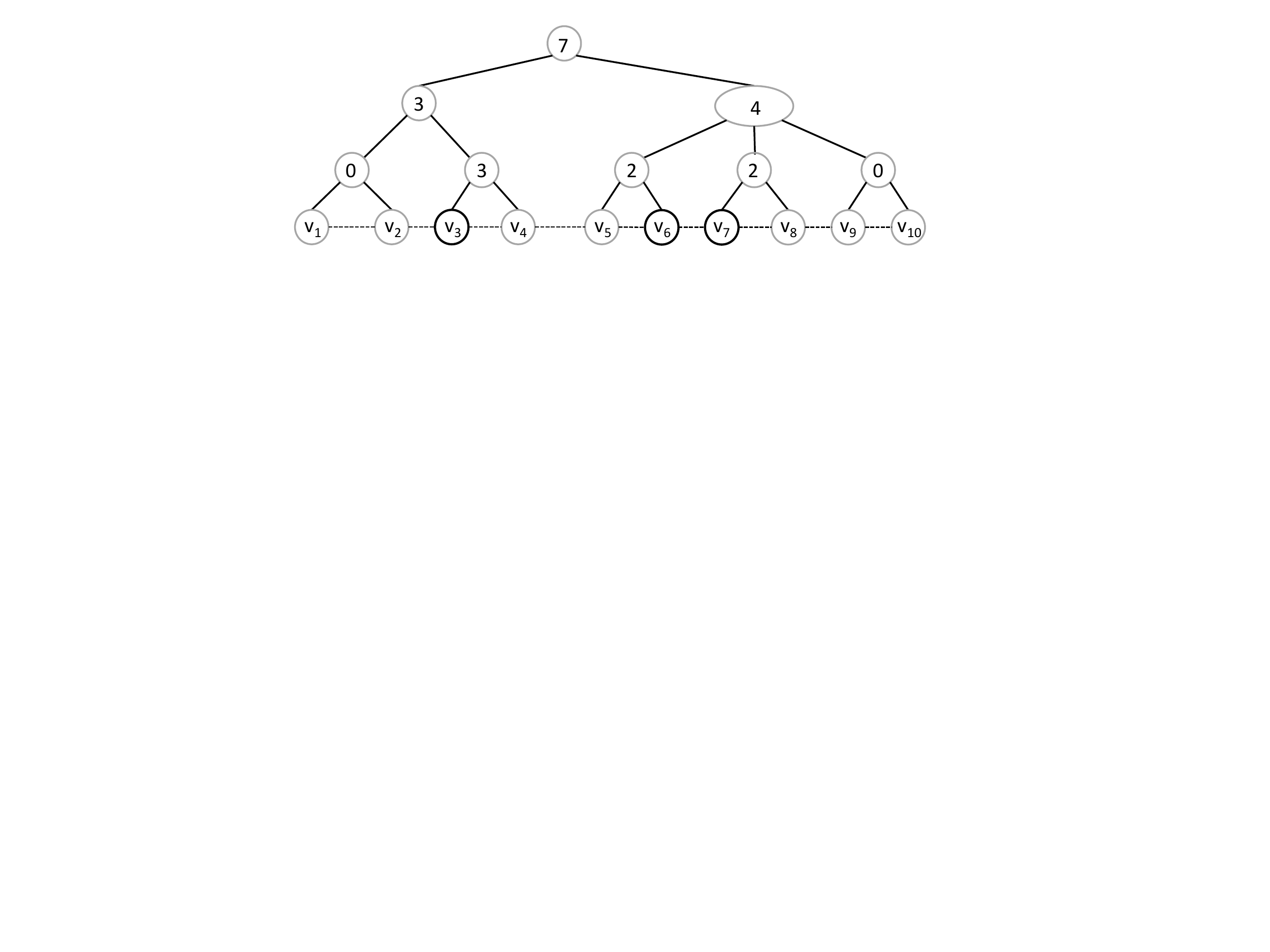}
	\caption{\small The 2-3 tree $\bt c$ built on the list of vertices inside chunk $c$. $v_{3}$, $v_{6}$ and $v_{7}$ are the principal copies inside chunk $c$, and the number of edges touching $v_{3}$, $v_{6}$ and $v_{7}$ is 3, 2, and 3, respectively.
The numbers inside the inner tree vertices are the edge counters.}
	\label{fig:chunk}
\end{figure}

\subsection{Splitting and Merging Chunks}
\begin{lemma}
  	\label{lem:chunksplitmergepar}
  	There exists an algorithm in the EREW PRAM model that supports splits and merges of chunks such that each operation costs $O(\log K)$ parallel worst-case time, using $O(J+K)$ processors.
\end{lemma}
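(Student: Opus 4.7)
The plan is to parallelize every step of Lemma~\ref{lem:chunksplitmerge}, preserving its structure but replacing each $O(K)$ element scan and each $O(J)$ cross-chunk update with a parallel analogue. The three main tools are (i) the 2-3 tree $\bt c$ of height $O(\log K)$, which supports splits and joins in $O(\log K)$ worst-case time, (ii) the primitive $\getedge{c}{k}$, which assigns distinct processors to the $O(K)$ edges touching $c$ in $O(\log K)$ EREW time, and (iii) direct processor-per-row and processor-per-entry access to the global matrix $C$, which sidesteps the exclusive-read obstacle when a single chunk's $CAdj$ row has to be touched by many processors at once.

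For splitting a chunk $c$ at a vertex $u$, one processor descends through $\bt c$ to the leaf containing $u$ and performs a 2-3 tree split there in $O(\log K)$ time, producing $\bt{c_1}$ and $\bt{c_2}$. One processor then allocates a fresh id $id_{c_2}$ and sets $id_{c_1} := id_c$. The chunk-id field of every vertex now in $c_2$ is rewritten in parallel by assigning $K$ processors to the leaves of $\bt{c_2}$ using the same subtree-scan scheme that implements $\getedge{c_2}{k}$, at total cost $O(\log K)$. Rows $id_{c_1}$ and $id_{c_2}$ of $C$ are reset to $\infty$ using $J$ processors in $O(1)$ time, after which $CAdj_{c_1}$ and $CAdj_{c_2}$ are recomputed from the $O(K)$ edges that now touch $c_1$ and $c_2$ (see the final paragraph). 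Finally, for every other chunk $c'$ a dedicated processor $p_{c'}$ reads the two cells $CAdj_{c_1}[id_{c'}]$ and $CAdj_{c_2}[id_{c'}]$ and writes them into $CAdj_{c'}[id_{c_1}]$ and $CAdj_{c'}[id_{c_2}]$; since distinct processors touch distinct cells of $C$, this cross-update runs in $O(1)$ EREW time with $J$ processors.

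Merging two adjacent chunks $c_1, c_2$ into $c$ is symmetric: join $\bt{c_1}$ and $\bt{c_2}$ into $\bt c$ in $O(\log K)$ time, reuse $id_c := id_{c_1}$ and release $id_{c_2}$, rewrite the chunk-id of each vertex that used to live in $c_2$ using $K$ processors in $O(\log K)$ time, recompute row $id_c$ of $C$ as in the split, zero out row $id_{c_2}$, and for every other chunk $c'$ use one processor to rewrite $CAdj_{c'}[id_{c_1}]$ and to set $CAdj_{c'}[id_{c_2}] = \infty$ in $O(1)$ time.

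The main obstacle is recomputing a single $CAdj$ row from the $O(K)$ edges incident to the new chunk within the EREW budget of $O(\log K)$ parallel time. This is a min-per-key reduction whose keys are destination chunk ids in $[J]$, and because several edges can collide on the same key, a naive scatter-with-min would violate the exclusive-write rule. The plan is to use the edge assignment given by $\getedge{c}{k}$ to deposit the pairs $(id_{c_v}, w(e))$ into a temporary length-$K$ array, sort that array by destination id (an integer sort on keys in a range of size $O(K)$, executable in $O(\log K)$ EREW time with $O(K)$ processors), and then run a segmented prefix-min along the sorted array so that each maximal run of equal keys contributes exactly one write of its minimum weight into the corresponding slot of the target row of $C$; distinct runs carry distinct keys, so these final writes are exclusive. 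Combining this $O(\log K)$ subroutine with the $O(\log K)$ tree, id, and cross-update steps yields the claimed $O(\log K)$ parallel worst-case time using $O(J+K)$ processors.
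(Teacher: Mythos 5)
Your proposal is correct and follows the paper's overall skeleton (split/join $\bt c$ in $O(\log K)$ time, reassign ids, recompute the affected $CAdj$ rows, then do the $O(J)$-processor cross-update writing $CAdj_{c'}[id_{c_1}]$ and $CAdj_{c'}[id_{c_2}]$ into distinct cells of $C$), but it resolves the one genuinely hard step --- the min-per-destination-chunk reduction over the $O(K)$ edges touching the new chunk --- by a different mechanism. The paper allocates $J$ binary tournament trees with $3K$ leaves each; after $\getedge{c_1}{k}$ places processor $p_k$ at the $k$'th leaf of the tree indexed by the destination chunk of $e_k$, a four-phase protocol per level deactivates the loser of each sibling pair, so that exactly one active processor writes each internal node and the root of $T_j$ ends up holding $\min$ of the weights destined for chunk $j$; exclusive writes are guaranteed combinatorially by the deactivation invariant. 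You instead sort the $K$ pairs $(id_{c_v}, w(e))$ by destination id and run a segmented prefix-min, with one exclusive write per maximal run. Both fit the $O(\log K)$ time and $O(K)$ processor budget: your route leans on a nontrivial black box (an EREW sort of $K$ items in $O(\log K)$ time with $O(K)$ processors, e.g.\ Cole's merge sort --- your phrase ``integer sort'' understates what is needed, since a counting-style sort does not obviously meet this bound on EREW, but a comparison sort does), whereas the paper's tournament construction is self-contained and elementary at the cost of $O(JK)$ auxiliary space and a more delicate correctness argument. A minor difference in the merge case: you recompute the merged row from scratch via the same sort-based subroutine, while the paper simply takes the entry-wise minimum of $CAdj_{c_1}$ and $CAdj_{c_2}$ in $O(1)$ time with $O(J)$ processors, which is cheaper and simpler; your version is still within the claimed bounds. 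Both write-ups share the same small gloss at initialization (two edges of $c$ ending at the same vertex $v$ would cause up to three concurrent reads of $\pc v$'s chunk id, fixable in $O(1)$ extra steps using the degree bound), so this is not a gap specific to your argument.
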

\noindent\begin{proof}
\emph{Splitting.}
Recall that splitting a chunk $c$ can happen for one of two reasons: (1) either the list containing $c$ needs to be split at a given vertex $u$ that is in $c$, or (2) $n_c>3K$ thereby violating Invariant~\ref{edgestochunk}. In the second case, processor $p_1$ locates the split location in $O(\log K)$ worst-case time by traversing down $\bt c$ using the edge counters.
Thus, we assume from now that the algorithm knows the split location $u$.

Processor $p_1$ splits $\bt c$ at vertex $u$ in $O(\log K)$ worst-case time.
Let $c_{1}$ and $c_{2}$ be the resulting chunks where $c_1$ contains the first part of $\bt c$ and $c_2$ contains the second part.
Processor $p_1$ sets $id_{c_{1}} = id_{c}$ and allocates a new (unique) id for $c_{2}$.
We now focus on creating $CAdj_{c_1}$, since $CAdj_{c_2}$ is created in the same manner.

The sequential algorithm for constructing $CAdj_{c_1}$ (in the proof of Theorem~\ref{theorem:mstseq}) scans all of the $O(K)$ edges touching $c_1$.
In the parallel setting, accessing all of the edges in parallel does not suffice since there could be several edges touching both $c_1$ and $\hat c$ for some other chunk $\hat c$, and the algorithm needs to store only the minimum weight of such an edge. To solve this issue we do the following.

The algorithm uses $J$ balanced binary tournament trees $T_1,T_2,\ldots, T_J$, where each tree has $3K$ leaves.
Each vertex $z$ in $T_j$ stores a value $A_z$ initialized to $\infty$\footnote{Reusing and initializing a temporary data structure in the parallel setting is implemented by either using a timestamp for each word of memory or rolling back all of the memory changes after the operation completes, thereby allowing the cost analysis to ignore the initialization cost.}.
The algorithm uses an iterative process for implementing a special tournament-like process.
During the iterative process, each processor will initially be \emph{active} until the processor decides to become \emph{inactive} and no longer participates in the process.
The iterative process implicitly uses an \emph{exclusive-assignment property} which states that each participating processor is assigned to a vertex in some tree $T_j$ such that there are no two processors that are assigned to the same vertex.
At the beginning of the $i$'th iteration the active processors are assigned to vertices whose height is $i-1$.

The initialization of the iterative process is as follows.
For each $1 \leq k \leq 3K$, processor $p_{k}$ sets itself as active and executes $\getedge{c_{1}}{k}$ thereby gaining access to $e_{k} = (u_{k}, v_{k})$, which is the $k$'th edge adjacent to chunk $c_{1}$.
Assume without loss of generality that $\pc {u_k} \in c_1$.
Let $c_{v_k}$ be the chunk such that $\pc{v_k} \in c_{v_k}$, and denote $id = id_{c_{v_k}}$.
Processor $p_k$ assigns itself in $O(1)$ worst-case time (using a lookup table) to the $k$'th leaf of $T_{id}$, denoted by $\ell_k$, and sets $A_{\ell_k}= w(e_k)$.
Thus, the exclusive-assignment property holds.

Each iteration has four \emph{synchronous} phases.
Recall that only active processors continue to participate in the process.
\begin{itemize}\compactify
  \item \emph{Phase 1.} If $p_k$ is assigned to a vertex $z$ that is the \emph{left} child of its parent $parent(z)$ then $p_k$ sets $A_{parent(z)}= w(e_k)$.
  \item \emph{Phase 2.} If $p_k$ is assigned to a vertex $z$ that is the \emph{right} child of its parent $parent(z)$ then: if $A_{parent(z)}> w(e_k)$ then $p_k$ sets $A_{parent(z)} = w(e_k)$ and otherwise $p_k$ becomes inactive.
  \item \emph{Phase 3.} If $p_k$ participated in the first phase and $A_{parent(z)} < w(e_k)$ then $p_k$ becomes inactive.
  \item \emph{Phase 4.} If $parent(z)$ is the root of a tournament tree, then the iterative process ends. Otherwise, $p_k$ is assigned to $parent(z)$.
\end{itemize}
Notice that by the exclusive-assignment property we are guaranteed that during the first two phases no two processors are writing to the same location in memory at the same time.
Also, we are guaranteed that if two processors are assigned to sibling vertices, then after the third phase the processor that is assigned to the lighter edge remains active, with ties favoring the left vertex, while the other processor becomes inactive.
Thus, after the third phase, if $p_k$ is still active then there is no other active processor $p_{\hat k}$ that is currently assigned to $z$, and so after the fourth phase the exclusive-assignment property holds.
At the end of the iterative process, the processor that is at the root of $T_j$ sets $CAdj_{c_1}[j]=A_{root(T_j)}$.

Finally, for each chunk $c'$, the algorithm sets $CAdj_{c'}[id_{c_1}] = CAdj_{c_1}[id_{c'}]$ and sets  $CAdj_{c'}[id_{c_2}] = CAdj_{c_2}[id_{c'}]$, which takes $O(1)$ parallel worst-case time using $O(J)$ processors.
Thus, the cost for splitting a chunk is $O(\log K)$ parallel worst-case time, using $O(J+K)$ processors.

\emph{Merging.}
Processor $p_1$ merges $\bt{c_{1}}$ and $\bt{c_{2}}$ in $O(\log K)$ worst-case time. Let $c$ denote the resulting chunk containing the concatenation of the lists of vertices represented by $\bt{c_{1}}$ and $\bt{c_{2}}$.
Processor $p_1$ sets $id_{c} = id_{c_{1}}$.
The new $CAdj$ array for $c$ is created by performing an entry-wise minimum of $CAdj_{c_1}$ and $CAdj_{c_2}$ in $O(1)$ parallel worst-case time using $O(J)$ processors.
Finally, for each chunk $c'$, the algorithm sets $CAdj_{c'}[id_{c}] = CAdj_{c}[id_{c'}]$ and sets $CAdj_{c'}[id_{c_2}] = \infty$, which takes $O(1)$ parallel worst-case time using $O(J)$ processors.
Thus, the cost for merging a chunk is $O(\log K)$ parallel worst-case time, using $O(J)$ processors.
\end{proof}

\subsection{LSDS Operations}
\begin{lemma}
\label{lem:lsdspar}
	There exists an implementation of the LSDS in the EREW PRAM model using $O(J)$ processors where each of the operations \texttt{LSInsert}, \texttt{LSDelete}, \texttt{LSJoin}, \texttt{LSSplit} and \texttt{UpdateAdj} takes $O(\log{J})$ parallel worst-case time.
\end{lemma}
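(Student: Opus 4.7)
The plan is to exploit the structural property of the new LSDS: whereas the sequential LSDS stored a length-$J$ vector at every tree vertex, the parallel LSDS is decomposed into $J$ identically-shaped 2-3 trees $S_1,\dots,S_J$, and every vertex of $S_j$ stores only the single scalar corresponding to the $j$-th column of the original $CAdj$ or $Memb$ aggregate. This turns the $\Theta(J)$-per-touched-vertex work from the sequential proof into $\Theta(1)$ work, and it lets us run the $J$ trees concurrently with one processor each.

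For \texttt{LSInsert}, \texttt{LSDelete}, \texttt{LSJoin}, \texttt{LSSplit}, I would assign processor $p_j$ to tree $S_j$ and have $p_j$ perform the standard $O(\log J)$-time sequential 2-3 tree modification on its own tree, recomputing the scalar at each of the $O(\log J)$ touched vertices from its $O(1)$ children in constant time. All $J$ trees have the same underlying shape, so the structural change is identical in each of them; since different trees occupy disjoint memory regions and each is touched by a single processor, the EREW discipline is automatically respected. If a root changes, $p_j$ can refresh its own entry in the root-pointer matrix locally. The combined parallel worst-case time is $O(\log J)$ with $J$ processors.

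For \texttt{UpdateAdj}$(c)$ the two subtasks decompose naturally across the trees. The first subtask, updating the aggregates along every ancestor of $c$, means walking from the leaf of $c$ up to the root in each $S_j$ and recomputing the scalar at each ancestor from its children. I would do this with one processor per tree, which gives $O(\log J)$ parallel time. The second subtask, refreshing the $id_c$-th entry at every vertex of the LSDS, lives entirely inside $S_{id_c}$: the caller has already updated the leaves of $S_{id_c}$, so I would run a standard bottom-up parallel reduction on that one tree. Activate one processor per leaf and at each of the $O(\log J)$ levels let each surviving processor advance to a distinct parent, read its $O(1)$ children, and write the aggregate; this uses at most $O(J)$ processors and $O(\log J)$ time.

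The main obstacle I expect is being disciplined about EREW concurrency during \texttt{LSJoin} and \texttt{LSSplit}, since a 2-3 tree join or split can propagate restructuring up an unbounded number of levels and can replace a root. This is handled by the observation that every such restructuring is confined to a single $S_j$ and executed by a single processor running the sequential algorithm, so no cell is ever read or written by two processors simultaneously; inter-tree writes are to disjoint memory, and the only globally shared object that changes is the root-pointer matrix, each row of which is touched by exactly one processor. Once this is verified, the claimed $O(\log J)$ bound for every operation follows immediately from combining the standard $O(\log J)$ 2-3 tree cost with the new $O(1)$-per-vertex aggregate cost.
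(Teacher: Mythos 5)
Your proposal matches the paper's proof essentially step for step: one processor per tree $S_j$ running the sequential $O(\log J)$ 2-3 tree operation for \texttt{LSInsert}/\texttt{LSDelete}/\texttt{LSJoin}/\texttt{LSSplit} and for the root-path portion of \texttt{UpdateAdj}, plus a bottom-up parallel sweep of $S_{id_c}$ (the paper uses the leftmost-child rule to pick the unique surviving processor per parent, which is the detail your ``advance to a distinct parent'' leaves implicit). The argument and bounds are the same as the paper's, so the proposal is correct.
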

\begin{proof}
\emph{All operations except for \texttt{UpdateAdj}.}	
	Recall that in the proof of Lemma~\ref{lem:lsdsseq} each basic tree operation touches at most $O(\log{J})$ vertices in the tree. We use a similar implementation as in Lemma~\ref{lem:lsdsseq}, but now processor $p_j$ for $1 \leq j \leq J$ performs the basic tree operations on $S_j$.
Thus, the total parallel worst-case time for each operation except for UpdateAdj is $O(\log{J})$.

\emph{Operation \texttt{UpdateAdj($c$)}.}
We again use a similar implementation as in Lemma~\ref{lem:lsdsseq}, with the following changes.
For each $S_j$, updating the path from the leaf representing $c$ to the root of $S_j$ costs $O(\log{J})$ parallel worst-case time using $O(J)$ processors.
In order to update $S_{id_c}$, processor $p_{j}$ for $1 \leq j \leq J$ is responsible for handling the leaf representing $chunks[j]$, which is accessible through the pointer stored in $CAdj_{chunks[j]}[id_c]$. We now need to sweep up $S_{id_c}$ in parallel, starting from all of the leaves of $S_{id_c}$. This process is described next.

The algorithm begins an iterative process where at the beginning of the $i$'th iteration there is a unique processer assigned to each vertex of height $i-1$ in $S_{id_c}$.
At the $i$'th iteration, suppose $p_{j}$ is assigned to vertex $z$ of height $i-1$. Then $p_j$ is reassigned to $parent(z)$ only if $z$ is the \emph{leftmost} child of $parent(z)$.
If $z$ is \emph{not} the leftmost child of $parent(z)$, then $p_{j}$ halts.
Thus, each vertex at height $i$ is assigned to exactly one processor.
If $p_{j}$ did not halt then $p_j$ updates the value stored in $parent(z)$ in $O(1)$ worst-case time.
The iterative process ends at the root, which happens after $O(\log{J})$ steps.
The parallel worst-case time cost \emph{per level} is $O(1)$ and $O(\log{J})$ worst-case time for the entire procedure. The number of processors used is $O(J)$.
\end{proof}

\subsection{Surgical Operations}
	\begin{lemma}	
		\label{lem:surgicalimplpar}
		There exists an algorithm in the EREW PRAM model in which each surgical operation on lists costs $O(\log J + \log K)$ parallel worst-case time using $O(J+K)$ processors and finding a MWR edge costs $O(\log J + \log K)$ parallel worst-case time using $O(J+K)$ processors.
	\end{lemma}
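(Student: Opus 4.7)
The plan is to lift the sequential proof of Lemma~\ref{lem:surgicalimpl} step by step, substituting each sequential primitive with its parallel counterpart from Lemmas~\ref{lem:chunksplitmergepar} and~\ref{lem:lsdspar}, and then to tackle the one genuinely new difficulty — recovering a MWR edge without violating the EREW requirement. For splitting a list $L$ at a vertex $u$ of chunk $c$, I would first call Lemma~\ref{lem:chunksplitmergepar} to split $c$ into $c_1, c_2$ in $O(\log K)$ time with $O(J+K)$ processors, then use Lemma~\ref{lem:lsdspar} to perform \texttt{LSInsert}$(c_2,c_1)$, \texttt{UpdateAdj}$(c_1)$, \texttt{UpdateAdj}$(c_2)$, and \texttt{LSSplit}, each in $O(\log J)$ time with $O(J)$ processors. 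If Invariant~\ref{edgestochunk} is violated at $c_1$ or $c_2$, a constant number of additional chunk and LSDS operations (again via these two lemmas) restore it without affecting the asymptotic cost. Joining two lists is just an \texttt{LSJoin} call in $O(\log J)$ time with $O(J)$ processors.

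Finding a MWR edge between the Euler tours stored in $LS_1$ and $LS_2$ begins by building the filter array $\gamma$ in parallel: processor $p_i$, for $1\le i\le J$, reads the two cells $Memb_{r_2}[i]$ and $CAdj_{r_1}[i]$ — located at the roots of $S_i$ in $LS_2$ and $LS_1$ respectively, and reached in $O(1)$ time through the root-pointer matrix — and writes $\gamma[i]$ accordingly. Because distinct processors touch only cells indexed by distinct $i$, no read or write conflicts arise. A standard $(w,i)$-carrying tournament over $\gamma$ then locates $id_{\hat c}=\arg\min_i\gamma[i]$ in $O(\log J)$ time with $O(J)$ processors.

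The main obstacle is recovering the concrete edge that attains weight $\gamma[id_{\hat c}]$: the sequential proof loops over the $O(K)$ edges incident to $\hat c$ and verifies each edge's target membership in $L_1$, but in parallel several processors may try to read the same cell $Memb_{r_1}[id_{c_v}]$ whenever multiple edges out of $\hat c$ share a target chunk $c_v$. I would sidestep this by reusing the tournament-tree construction from the proof of Lemma~\ref{lem:chunksplitmergepar}: allocate (lazily, with timestamped $\infty$-initialization) the $J$ virtual trees $T_1,\ldots,T_J$ of $3K$ leaves each, have each $p_k$ execute $\getedge{\hat c}{k}$ to obtain $e_k=(u_k,v_k)$, and assign $p_k$ to the $k$-th leaf of $T_{id_{c_{v_k}}}$ carrying the pair $(w(e_k),k)$. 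Running the exclusive-assignment iterative tournament for $O(\log K)$ rounds deposits at the root of each $T_j$ the minimum-weight edge from $\hat c$ to chunk $j$ together with a witness leaf index, and this stays within EREW because the $T_j$'s use disjoint memory and the exclusive-assignment property is maintained. Next, one processor per $j\in[J]$ reads $Memb$ at the root of $S_j$ in $LS_1$ — distinct roots, so no conflict — and overwrites the root of $T_j$ with $\infty$ whenever $j\notin L_1$. A final $O(\log J)$-time tournament over the $J$ tree roots identifies the overall minimum, whose attached witness yields the MWR edge. The total cost is $O(\log J + \log K)$ parallel time with $O(J+K)$ processors.
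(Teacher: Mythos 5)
Your proof is correct and follows the paper's structure almost everywhere: the surgical operations are handled identically (replay the sequential proof of Lemma~\ref{lem:surgicalimpl} with Lemmas~\ref{lem:chunksplitmergepar} and~\ref{lem:lsdspar} substituted in), and the first half of the MWR search --- building $\gamma$ via one processor per index at the roots of the $S_j$ trees, then a tournament for $\arg\min$ --- is exactly what the paper does. The one place you genuinely diverge is the membership-verification step for the $O(K)$ edges of $\hat c$. The paper simply performs the verification in the CREW model (each $p_k$ reads the $Memb$ value at the root of $S_{id_{c_v}}$ of $LS_1$, with possible concurrent reads when several edges share a target chunk) and then invokes the generic CREW-to-EREW simulation of~\cite{Jaja} at a cost of $O(\log K)$. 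You instead build an explicit EREW solution: you reuse the per-target-chunk tournament trees $T_1,\ldots,T_J$ from the proof of Lemma~\ref{lem:chunksplitmergepar} to reduce the $O(K)$ edges to at most one candidate (the minimum-weight witness) per target chunk, after which a single processor per chunk $j$ performs the membership read at the root of $S_j$ with no read conflicts, followed by a final tournament over the $J$ roots. Both routes cost $O(\log J+\log K)$ time with $O(J+K)$ processors, so the bounds are unaffected; your version is more self-contained and constructive (no black-box simulation), at the price of reusing the $O(JK)$-cell lazily-initialized tournament structure, while the paper's version is shorter but leans on an external reduction. I see no gap in your argument.
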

	
	\noindent\begin{proof}
The implementation of both splitting and merging lists remains the same as in the proof of Lemma~\ref{lem:surgicalimpl}, but this time applying Lemma~\ref{lem:chunksplitmergepar} instead of Lemma~\ref{lem:chunksplitmerge}.
So the operation of splitting a list costs $O(\log K )$ parallel worst-case time using $O(J+K)$ processors, and the operation of merging two lists costs $O(\log J + \log K)$ parallel worst-case time using $O(J+K)$ processors.
	
	\emph{Finding a MWR edge.}
	The algorithm constructs the array $\gamma$, as defined in the proof of Lemma~\ref{lem:surgicalimpl}, but now processor $p_j$ for $j\in [J]$ computes $\gamma[j]$ in $O(1)$ parallel worst-case time by accessing the root of $S_j$ in each LSDS in constant worst-case time (using the lookup matrix).
	Let ${\hat c}=\arg \min_{\text{chunk } c}\{\gamma[id_c]\}$.
    Recall that the minimum weight edge between $ET_1$ and $ET_2$ (as defined in the proof of Lemma~\ref{lem:surgicalimpl}) touches a vertex $u$ such that $\pc{u} \in \hat c$.
    The algorithm in the proof of Lemma~\ref{lem:surgicalimpl} computes $id_{\hat c}$ in $O(J)$ worst-case time by scanning $\gamma$ and finding the smallest entry.
    In the EREW PRAM model, the algorithm uses a tournament tree to find the smallest entry, which costs $O(\log{J})$ parallel worst-case time using $O(J)$ processors.
	Next, processor $p_k$ for $k\in [3K]$ accesses edge $e_k=getEdge_{\hat c}(k)$.
    Let $e_k=(u,v)$ where $\pc u \in \hat c$.
    In the CREW PRAM model, processor $p_k$ verifies in $O(1)$ whether the chunk $c_v$ containing $\pc v$ is in $LS_1$ (as defined in the proof of Lemma~\ref{lem:surgicalimpl}) by looking at the $Memb$ value in the root of $S_{id_{c_v}}$ of $LS_1$. Using the reduction of~\cite{Jaja}, this process costs $O(\log K)$ worst-case time in the EREW model.
    Finally, the algorithm picks the lightest edge via a tournament tree algorithm whose participants are the processors whose edge passed the verification.
	The algorithm for finding the MWR edge takes $O(\log J +\log K)$ parallel worst-case time, using $O(J)$ processors.
	\end{proof}

\subsection{Graph Updates}
\begin{proof}[Proof of Theorem~\ref{theorem:mstpar}]
\emph{Edge insertion.}
The algorithm for inserting an edge is the same as in the sequential algorithm in the proof of Theorem \ref{theorem:mstseq}, but this time applying Lemmas~\ref{lem:chunksplitmergepar},~\ref{lem:lsdspar}, and~\ref{lem:surgicalimplpar} instead of Lemmas~\ref{lem:chunksplitmerge},~\ref{lem:lsdsseq}, and~\ref{lem:surgicalimpl}. The parallel worst-case update time is $O(\log{J} + \log{K})$, by using $O(J+K)$ processors.
		
\emph{Edge deletion.}
The algorithm for deleting an edge is the same as in the sequential algorithm in the proof of Theorem \ref{theorem:mstseq}, except for two changes:
(1) the edge deletion algorithm applys Lemmas~\ref{lem:chunksplitmergepar},~\ref{lem:lsdspar}, and~\ref{lem:surgicalimplpar} instead of Lemmas~\ref{lem:chunksplitmerge},~\ref{lem:lsdsseq}, and~\ref{lem:surgicalimpl}, and
(2) the new minimum weight edge connecting chunks $c_1$ and $c_2$ (as defined in the edge deletion operation in the proof of Theorem~\ref{theorem:mstseq}) is found in the EREW PRAM model by using a tournament tree which costs $O(\log{K})$ parallel worst-case time using $O(K)$ processors.
The parallel worst-case update time is $O(\log{J} + \log{K})$, by using $O(J+K)$ processors.
	
\emph{Time cost.} By setting $K = O(\sqrt{n})$, the insertion and deletion costs become $O(\log{n})$ parallel worst-case time using $O(\sqrt n)$ processors, for a total work of $O(\sqrt {n} \log n)$.
\end{proof}

\section{Conclusion}
We described an algorithm for solving dynamic MSF on sparse graphs in the EREW PRAM model that uses $O(\sqrt n)$ processors and has $O(\log n)$ worst-case update time. The resulted work of the algorithm is $O(\sqrt{n} \log{n})$.
By extending the sparsification technique to work in the EREW PRAM model (see Section~\ref{sec:parallel_sparse}), the algorithm can be used for solving dynamic MSF on general graphs with the same complexities.
Thus, the total work is $O(\sqrt n \log n)$. We leave open the task of designing a solution that has a parallel $O(\log n)$ worst-case update time, but only $O(\sqrt n)$ work, thereby matching the amount of work used in the sequential solutions.

\section{Sparsification in the EREW PRAM Model}\label{sec:parallel_sparse}

\subsection{The Sparsification Tree}
We begin by following the construction of~\cite{EppsteinGIN97}.
The construction of the sparsification tree structure begins with a recursive partitioning of the vertices of the graph into two evenly split halves.
We end up with a complete binary tree called the \emph{vertex-partition tree} in which a tree vertex at level $i$ is associated with $\frac n {2^i}$ graph vertices.
The vertex-partition tree is used to partition the edges of the graph into an \emph{edge-partition tree} as follows.
For every unordered pair of vertex-partition tree vertices $\alpha$ and $\beta$ at level $i$ (including the pair in which $\alpha=\beta$) with corresponding graph vertex sets $V_\alpha$ and $V_\beta$, we create an edge-partition tree vertex $E_{\alpha \beta}$ in the edge-partition tree.
The vertex $E_{\alpha \beta}$ conceptually corresponds to the set of all edges between vertices from $V_\alpha$ and vertices from $V_\beta$.
If the vertex-partition tree partitions $V_\alpha$ ($V_{\beta}$) into $V_{\alpha_1}$ and $V_{\alpha_2}$ ($V_{\beta_1}$ and $V_{\beta_2}$) then the children of $E_{\alpha \beta}$ in the edge-partition tree are $E_{\alpha_1 \beta_1}, E_{\alpha_1 \beta_2}, E_{\alpha_2 \beta_1},$ and $E_{\alpha_2 \beta_2}$.
Notice that if $\alpha = \beta$ then $E_{\alpha_1 \beta_2}$ and $E_{\alpha_2 \beta_1}$ are the same.
Thus, if $E_{\alpha \beta}$ is not a leaf then $E_{\alpha \beta}$ has $3$ or $4$ children, depending on whether $\alpha=\beta$ or not.

Each $E_{\alpha \beta}$ maintains a \emph{local graph} ${G_{\alpha \beta}}\subseteq G$ whose set of edges is the union of the MSF edges of the children of $E_{\alpha \beta}$. Thus, the size of a graph at level $i$ is $O(\frac n {2^i})$.
Each $E_{\alpha \beta}$ maintains an instance of dynamic MSF on ${G_{\alpha \beta}}$.
Eppstein et al.~\cite{EppsteinGIN97} proved that the MSF at the root of the edge-partition tree is the MSF of the graph $G$.

For $u\in V_\alpha$ let $u_{\alpha \beta}$ be the copy of graph vertex $u\in V$ in ${G_{\alpha \beta}}$.
Similarly, for $u\in V_\beta$ let $u_{\beta \alpha}$ be the copy of graph vertex $u\in V$ in ${G_{\alpha \beta}}$.
Notice that, by the construction of the edge-partition tree, \textbf{if}: (1) $E_{\alpha \beta}$ is not a leaf, (2) the vertex-partition tree partitions $V_\alpha$ ($V_{\beta}$) into $V_{\alpha_1}$ and $V_{\alpha_2}$ ($V_{\beta_1}$ and $V_{\beta_2}$), and (3) $u\in V_{\alpha_1}$, \textbf{then} $u$ has copies in both ${G_{\alpha_1 \beta_1}}$ and ${G_{\alpha_1 \beta_2}}$.

Let $e_{\alpha \beta}$ be the copy of graph edge $e\in E$ in ${G_{\alpha \beta}}$, if it exists.
Moreover, if $e_{\alpha \beta}$ is a tree edge for the MSF of $G_{\alpha \beta}$ then $e_{\alpha' \beta'} \in {G_{\alpha' \beta'}}$ where $E_{\alpha' \beta'}$ is the parent of $E_{\alpha \beta}$.

\paragraph{Pointers between copies.}
Notice that the dynamic MSF data structure is a data structure on edges of graphs (even if the runtime depends on the number of vertices), and so the data structure does not explicitly store singleton vertices of ${G_{\alpha \beta}}$.
Suppose $E_{\alpha \beta}$ is not a leaf and suppose $u_{\alpha \beta}$ is not a singleton vertex.
Let $E_{\alpha_1 \beta_1}$ and $E_{\alpha_2 \beta_2}$ be the two children of $E_{\alpha \beta}$ that contain $u_{\alpha_1 \beta_1}$ and $u_{\alpha_2 \beta_2}$, respectively.
Then $u_{\alpha \beta}$ stores \emph{vertex-copy pointers} to both $u_{\alpha_1 \beta_1}$ and $u_{\alpha_2 \beta_2}$.
Moreover, every non singleton graph vertex $u\in V$ stores two vertex-copy pointers to the two copies of $u$ in the root of the edge-partition tree\footnote{Notice that each graph vertex appears twice in the root of the edge-partition tree, since the root corresponds to all edges in $V\times V$.}.

Suppose $E_{\alpha \beta}$ is not the root of the edge-partition tree and suppose $e_{\alpha \beta}$ is a tree edge in the MSF of ${G_{\alpha \beta}}$.
Let $E_{\alpha' \beta'}$ be the parent of $E_{\alpha \beta}$. Then $e_{\alpha \beta}$ stores a bidirectional  \emph{edge-copy pointer} to $e_{\alpha' \beta'}$.
Notice that, by construction, the leaves in the edge-partition tree have a bijection with pairs of vertices from $G$.
Thus, each graph edge $e=(u,v)\in E$ stores an edge-copy pointer to the copy of $e$ in ${G_{\alpha \beta}}$ where $V_\alpha =\{u\}$ and $V_\beta =\{v\}$.

Following Eppstein et al.~\cite{EppsteinGIN97}, we modify the edge-partition tree in order to reduce the space usage.
The data structure stores $E_{\alpha \beta}$ only if there is at least one edge between a vertex in $V_\alpha$ and a vertex in $V_\beta$.
Thus the total number of stored leaves is $m$ and since we do not store singleton vertices, the total space usage becomes $O(m\log n)$.
The modified edge-partition tree is the sparsification tree which we denote by $T$.

\subsection{Sequential Sparsification}
We describe the sequential sparsification in a particular way that caters towards the parallel implementation.

\subsubsection{Edge Insertion}
Suppose we insert edge $e=(u,v)$ to $G$.
Starting at the root of $T$ the algorithm traverses down $T$ with the goal of visiting all the vertices $E_{\alpha \beta}$ in $T$ such that $u\in V_\alpha$ and $v\in V_\beta$.
This traversal takes place by moving from  $E_{\alpha \beta}$ to its only child $E_{\alpha_1 \beta_1}$, if such a child explicitly exists, such that $u\in V_{\alpha_1}$ and $v\in V_{\beta_1}$.
Once such a child does not exist, the algorithm completes the path towards the leaf corresponding to $e$ by adding the missing vertices to $T$.

Next, the algorithm once again traverses the path from the root of $T$ down to the leaf corresponding to $e$, together with the vertex-copy pointers, and whenever the algorithm visits a graph ${G_{\alpha \beta}}$ that does not contain either $u_{\alpha \beta}$ or $v_{\beta \alpha}$, the algorithm adds the missing $u_{\alpha \beta}$ or $v_{\beta \alpha}$ to ${G_{\alpha \beta}}$.
As this traversal takes place, the algorithm stores a list of the copies of $u$ and $v$ in an array $VCopy$. In particular, if $E_{\alpha \beta}$ is at level $i$ in $T$ such that $u\in V_\alpha$ and $v\in V_\beta$ then $VCopy[i] = (u_{\alpha \beta},v_{\beta \alpha})$.

For each $E_{\alpha \beta}$ on the path with parent $E_{\alpha' \beta'}$, the algorithm uses the dynamic tree data structure of Sleator and Tarjan~\cite{DBLP:journals/jcss/SleatorT83} to test whether $e_{\alpha \beta}$ should be added to the MSF of ${G_{\alpha \beta}}$.
For efficiency purposes, this test uses the direct access to the copies of $u$ and $v$ that is given by $VCopy$. If the answer is yes, then the algorithm inserts $e_{\alpha' \beta'}$ into ${G_{\alpha' \beta'}}$ while also updating the dynamic MSF data structure of ${G_{\alpha' \beta'}}$.
The insertion of $e_{\alpha \beta}$ into ${G_{\alpha \beta}}$, if needed, is initiated by the same test that takes place at the appropriate child of $E_{\alpha \beta}$.
Notice that adding $e_{\alpha \beta}$ to the MSF of ${G_{\alpha \beta}}$ may cause a different edge $e'_{\alpha \beta}$ to be removed from the same MSF. In such a case, $e'_{\alpha' \beta'}$ is deleted from  ${G_{\alpha' \beta'}}$. Finally, the algorithm updates the appropriate vertex-copy and edge-copy pointers in a straightforward manner.

\paragraph{Cost analysis.} Adding the missing tree vertices to $T$ and constructing $VCopy$ costs $O(\log n)$ worst-case time.
For each level $i$ the algorithm performs a test using the dynamic tree data structure, and then executes a constant number of graph updates on a graph of size $O(\frac n{2^i})$. Thus, the total worst-case time cost for all levels is $\sum_{i=1}{O(\log n)} O(\sqrt{\frac n {2^i}\log \frac n {2^i}}) = O(\sqrt {n\log n})$.

\subsubsection{Edge Deletion}
Suppose we delete edge $e=(u,v)$ from $G$.
The algorithm begins by traversing up $T$ using the edge-copy pointers starting from $e$, until the algorithm reaches the highest vertex $E_{\alpha^* \beta^*}$ such that $e_{\alpha^* \beta^*}$ is in $G_{\alpha^* \beta^*}$.
For each vertex $E_{\alpha \beta}$ on the path from the leaf corresponding to $e$ and $E_{\alpha^* \beta^*}$, the algorithm removes $e_{\alpha \beta}$ from ${G_{\alpha \beta}}$.
If $e_{\alpha \beta}$ was the only edge in ${G_{\alpha \beta}}$ then $E_{\alpha \beta}$ is removed from $T$.
If the removal of $e_{\alpha \beta}$ causes a copy of a graph vertex to become a singleton, then the copy is removed from $G_{\alpha \beta}$.

The algorithm makes use of an array $REdges$ of size $O(\log n)$, with one entry per level in $T$.
If $E_{\alpha \beta}$ is at level $i$ in $T$ and $e_{\alpha \beta}$ was a tree edge then removing $e_{\alpha \beta}$ may cause a different edge $e'_{\alpha \beta}$ to become a tree edge.
In such a case, we set $REdges[i]= e'_{\alpha \beta}$.
Otherwise we set $REdges[i]= NULL$.
For $E_{\alpha \beta}$ at level $i$, the lightest edge from $\{REdges[1],REdges[2],\ldots,REdges[i-1]\}$ is inserted into $G_{\alpha \beta}$.
Determining which edge copy to insert at each level costs $O(\log n)$ worst-case time by scanning $REdges$.
Finally, the algorithm updates edge-copy and vertex-copy pointers as necessary.

\paragraph{Cost analysis.} Similar to the insertion cost, the cost of a deletion is $\sum_{i=1}{O(\log n)} O(\sqrt{\frac n {2^i}\log \frac n {2^i}}) = O(\sqrt {n\log n})$ worst-case time.

\subsection{Parallel Sparsification}
Notice that the operations in the sequential sparsification that take place during graph updates can be classified into two classes. The first class are operations do not benefit from parallelization, which include the first two traversals during the insertion of an edge (including the constructing of $VCopy$), accessing all of the copies of a deleted edge, and using $REdges$ to determine which edges need to be inserted into local graphs. All of these operations cost $O(\log n)$ sequential worst-case time. The second class are operations that do benefit from parallelization,
since these operations can be executed independently on each level in $T$. These include determining whether a new edge will become a tree edge in a local graph, a constant number of insertions and deletions into a local graph, and the construction of $REdges$. By applying Theorem~\ref{theorem:mstpar} to each dynamic MSF data structure, the total worst-case time cost of each of these operations is $O(\log n)$ while the number of processors used at level $i$ is $O(\sqrt{\frac n {2^i}})$. Thus the total worst-case time cost is $O(\log n)$ while the total number of processors used is $O(\sqrt n)$ for a total of $O(\sqrt n \log n)$ work.

\section{Lists Containing Only One Chunk}
\label{sec:onechunk}
In the special case of a list containing only one chunk $c$ we have $n_c<K$.
We call such a list a \emph{short list}, and the algorithm does not give a unique id to the only chunk of the list.
Moreover,
The algorithm does not maintain a $CAdj$ vector for this chunk.

\paragraph{Joining lists.} Suppose the algorithm joins two lists $L_1$ and $L_2$, and $L_2$ is short. Let $c_2$ be the single chunk in $L_2$. If the concatenation $L_{1}L_{2}$ is \emph{not} short, then $c_2$ is given a unique id from $[J]$, and a new LSDS representing the concatenation is constructed. Next, $c_2$ is merged and split with the adjacent chunk in order to restore Invariant~\ref{edgestochunk}.

\paragraph{Splitting lists.} Suppose the algorithm splits a list $L$ into two lists $L_1$ and $L_2$. If $L_2$ is short with a single chunk $c_2$, then does not allocate a new id to $c_2$.

\paragraph{Finding a MWR edge.} When trying to find a MWR edge between two lists and at least one of the lists is short, the algorithm scans all vertices in the short list in order to find the minimum replacement edge in $O(K)$ worst-case time (or in $O(\log K)$ parallel worst-case time using $O(K)$ processors in the EREW PRAM model, by using a tournament tree).

\section{Acknowledgments}
This work is supported in part by ISF grant 1278/16. This project has received funding from the European Research Council (ERC) under the European Union’s Horizon 2020 research and innovation programme (grant agreement No 683064).

\bibliography{DynConnThes}

\begin{thebibliography}{10}

\bibitem{DBLP:conf/focs/AbrahamDKKP16}
Ittai Abraham, David Durfee, Ioannis Koutis, Sebastian Krinninger, and Richard
  Peng.
\newblock On fully dynamic graph sparsifiers.
\newblock In {\em {IEEE} 57th Annual Symposium on Foundations of Computer
  Science, {(FOCS)}}, pages 335--344, 2016.

\bibitem{DasF94}
Sajal~K. Das and Paolo Ferragina.
\newblock An o(n) work {EREW} parallel algorithm for updating {MST}.
\newblock In {\em 2'nd Annual European Symposium on Algorithms, {(ESA)}}, pages
  331--342, 1994.

\bibitem{DBLP:conf/focs/EppsteinGIN92}
David Eppstein, Zvi Galil, Giuseppe~F. Italiano, and Amnon Nissenzweig.
\newblock Sparsification-a technique for speeding up dynamic graph algorithms
  (extended abstract).
\newblock In {\em 33rd Annual Symposium on Foundations of Computer Science,
  {(FOCS)}}, pages 60--69, 1992.

\bibitem{EppsteinGIN97}
David Eppstein, Zvi Galil, Giuseppe~F. Italiano, and Amnon Nissenzweig.
\newblock Sparsification - a technique for speeding up dynamic graph
  algorithms.
\newblock {\em J. {ACM}}, 44(5):669--696, 1997.

\bibitem{Ferragina95}
Paolo Ferragina.
\newblock An {EREW} {PRAM} fully-dynamic algorithm for {MST}.
\newblock In {\em The 9th International Parallel Processing Symposium,
  {(IPPS)}}, pages 93--100, 1995.

\bibitem{Frederickson85}
Greg~N. Frederickson.
\newblock Data structures for on-line updating of minimum spanning trees, with
  applications.
\newblock {\em {SIAM} J. Comput.}, 14(4):781--798, 1985.

\bibitem{GibbKKT15}
D.~Gibb, B.~M. Kapron, V.~King, and N.~Thorn.
\newblock Dynamic graph connectivity with improved worst case update time and
  sublinear space.
\newblock {\em CoRR}, abs/1509.06464, 2015.

\bibitem{DBLP:conf/icalp/HenzingerK97}
Monika~Rauch Henzinger and Valerie King.
\newblock Maintaining minimum spanning trees in dynamic graphs.
\newblock In {\em Automata, Languages and Programming, 24th International
  Colloquium, {(ICALP)}}, pages 594--604, 1997.

\bibitem{DBLP:journals/jacm/HolmLT01}
Jacob Holm, Kristian de~Lichtenberg, and Mikkel Thorup.
\newblock Poly-logarithmic deterministic fully-dynamic algorithms for
  connectivity, minimum spanning tree, 2-edge, and biconnectivity.
\newblock {\em J. {ACM}}, 48(4):723--760, 2001.

\bibitem{DBLP:conf/esa/HolmRW15}
Jacob Holm, Eva Rotenberg, and Christian Wulff{-}Nilsen.
\newblock Faster fully-dynamic minimum spanning forest.
\newblock In {\em Algorithms - 23rd Annual European Symposium, {(ESA)},
  Proceedings}, pages 742--753, 2015.

\bibitem{DBLP:conf/soda/HuangHKP17}
Shang{-}En Huang, Dawei Huang, Tsvi Kopelowitz, and Seth Pettie.
\newblock Fully dynamic connectivity in \emph{O}(log \emph{n}(log log
  \emph{n})\({}^{\mbox{2}}\)) amortized expected time.
\newblock In {\em Proceedings of the Twenty-Eighth Annual {ACM-SIAM} Symposium
  on Discrete Algorithms, {(SODA)}}, pages 510--520, 2017.

\bibitem{Jaja}
Joseph J´aJ´a.
\newblock An introduction to parallel algorithms.
\newblock In {\em Addison-Wesley}, 1992.

\bibitem{KapronKM13}
Bruce~M. Kapron, Valerie King, and Ben Mountjoy.
\newblock Dynamic graph connectivity in polylogarithmic worst case time.
\newblock In {\em Proceedings of the Twenty-Fourth Annual {ACM-SIAM} Symposium
  on Discrete Algorithms, {(SODA)}}, pages 1131--1142, 2013.

\bibitem{Kejlberg-Rasmussen16}
Casper Kejlberg{-}Rasmussen, Tsvi Kopelowitz, Seth Pettie, and Mikkel Thorup.
\newblock Faster worst case deterministic dynamic connectivity.
\newblock In {\em 24th Annual European Symposium on Algorithms, {(ESA)}}, pages
  53:1--53:15, 2016.

\bibitem{LiangMckay}
W.~Liang and B.D. McKay.
\newblock Fully dynamic maintenance of minimum spanning trees by using a
  sublinear number of processors.
\newblock In {\em Unpublished Manuscripts}, 1994.

\bibitem{DBLP:conf/stoc/NanongkaiS17}
Danupon Nanongkai and Thatchaphol Saranurak.
\newblock Dynamic spanning forest with worst-case update time: adaptive, las
  vegas, and o(n\({}^{\mbox{1/2 - {\(\epsilon\)}}}\))-time.
\newblock In {\em Proceedings of the 49th Annual {ACM} {SIGACT} Symposium on
  Theory of Computing, {(STOC)}}, pages 1122--1129, 2017.

\bibitem{DBLP:conf/focs/NanongkaiSW17}
Danupon Nanongkai, Thatchaphol Saranurak, and Christian Wulff{-}Nilsen.
\newblock Dynamic minimum spanning forest with subpolynomial worst-case update
  time.
\newblock In {\em 58th {IEEE} Annual Symposium on Foundations of Computer
  Science, {(FOCS)}}, pages 950--961, 2017.

\bibitem{DBLP:journals/siamcomp/PatrascuD06}
Mihai \Patrascu{} and Erik~D. Demaine.
\newblock Logarithmic lower bounds in the cell-probe model.
\newblock {\em {SIAM} J. Comput.}, 35(4):932--963, 2006.

\bibitem{DBLP:journals/jcss/SleatorT83}
Daniel~Dominic Sleator and Robert~Endre Tarjan.
\newblock A data structure for dynamic trees.
\newblock {\em J. Comput. Syst. Sci.}, 26(3):362--391, 1983.

\bibitem{DBLP:conf/swat/Thorup00}
Mikkel Thorup.
\newblock Dynamic graph algorithms with applications.
\newblock In {\em 7th Scandinavian Workshop on Algorithm Theory, {(SWAT)}},
  pages 1--9, 2000.

\bibitem{DBLP:journals/combinatorica/Thorup07}
Mikkel Thorup.
\newblock Fully-dynamic min-cut.
\newblock {\em Combinatorica}, 27(1):91--127, 2007.

\bibitem{DBLP:journals/corr/Wang15w}
Zhengyu Wang.
\newblock An improved randomized data structure for dynamic graph connectivity.
\newblock {\em CoRR}, abs/1510.04590, 2015.

\bibitem{Wulff-Nilsen13a}
Christian Wulff{-}Nilsen.
\newblock Faster deterministic fully-dynamic graph connectivity.
\newblock In {\em Proceedings of the Twenty-Fourth Annual {ACM-SIAM} Symposium
  on Discrete Algorithms, {(SODA)}}, pages 1757--1769, 2013.

\bibitem{DBLP:conf/stoc/Wulff-Nilsen17}
Christian Wulff{-}Nilsen.
\newblock Fully-dynamic minimum spanning forest with improved worst-case update
  time.
\newblock In {\em Proceedings of the 49th Annual {ACM} {SIGACT} Symposium on
  Theory of Computing, {(STOC)}}, pages 1130--1143, 2017.

\end{thebibliography}
\bibliographystyle{plain}

\end{document}